	\newtheorem{lemma}{Lemma}%
	\newtheorem{theorem}{Theorem}%
	\newtheorem{example}{Example}
			\newtheorem{remark}{Remark}
	\newcommand{\clusterset}{\ensuremath{\mathcal{C}}}
\definecolor{light-gray}{gray}{0.95}
	\newcommand{\ceil}[1]{\left\lceil #1 \right\rceil}
	\newcommand{\floor}[1]{\left\lfloor #1 \right\rfloor}
	\newlength{\wordlength}
	\newcommand{\wordbox}[3][c]{\settowidth{\wordlength}{#3}\makebox[\wordlength][#1]{#2}}
\newcommand{\allocshares}{\textsc{AllocationFromShares}\xspace}
\newcommand{\edp}{\textsc{ExactDollarPartition}\xspace}
\begin{document}

	\title{Strategyproof Peer Selection using\\ Randomization, Partitioning, and Apportionment\footnote{This is a significantly revised and expanded version of our conference paper from AAAI 2016 \cite{ALM+16a}. This version introduces the exact version of Dollar Partition along with new proofs and a new experiment.}}

		\author{Haris Aziz}\ead{haris.aziz@data61.csiro.au}
		\address{UNSW Sydney and Data61 CSIRO, Sydney 2052, Australia} 
	\author{Omer Lev} \ead{omerlev@bgu.ac.il}
	\address{Ben-Gurion University of the Negev, Beersheba 8410501, Israel}
		\author{Nicholas Mattei} \ead{nsmattei@tulane.edu}
		\address{Tulane University, New Orleans, LA 70115, USA} 
		\author{Jeffrey S. Rosenschein} \ead{jeff@cs.huji.ac.il}
	\address{Hebrew University of Jerusalem, Jerusalem 91904, Israel}	
	\author{Toby Walsh} \ead{toby.walsh@data61.csiro.au}
	\address{UNSW Sydney and Data61 CSIRO, Sydney 2052, Australia}

	%\fntext[fn1]{Thanks!} 

	\begin{abstract}

		Peer reviews, evaluations, and selections are a fundamental aspect of modern science. Funding bodies the world over employ experts to review and select the best proposals from those submitted for funding. The problem of peer selection, however, is much more general: a professional society may want to give a subset of its members awards based on the opinions of all members; an instructor for a Massive Open Online Course (MOOC) or an online course may want to crowdsource grading; or a marketing company may select ideas from group brainstorming sessions based on peer evaluation.

		We make three fundamental contributions to the study of peer selection, a specific type of group decision-making problem, studied in computer science, economics, and political science.
		First, we propose a novel mechanism that is strategyproof, i.e., agents cannot benefit by reporting insincere valuations. %, in addition to other desirable normative properties.
		Second, we demonstrate the effectiveness of our mechanism by a comprehensive simulation-based comparison with a suite of mechanisms found in the literature.
		Finally, our mechanism employs a randomized rounding technique that is of independent interest, as it solves the apportionment problem that arises in various settings where discrete resources such as parliamentary representation slots need to be divided proportionally.
	\end{abstract}

	\begin{keyword}
peer review; crowdsourcing; algorithms; allocation
	\end{keyword}

\maketitle

\section{Introduction}
Since the beginning of civilization, societies have been 
selecting small groups from within. Athenian 
society, for example, selected a random subset of 
citizens %\footnote{Women and slaves were not included.} 
to participate in the Boule, the council of citizens that ran
daily affairs in Athens. 
Peer review, evaluation, and selection has been the main process by which scientific conferences have selected a subset of papers for publication. Increasingly, peer evaluation is
becoming popular and necessary to scale grading in MOOCs (Massive Open Online Courses, e.g., Coursera and EdX) \cite{PHCDNK13,LRP14,CKV15a}.
In all of these peer selection settings, however, we do not wish to select
an arbitrary subset of size $k$, but the ``best $k$'', and we need, therefore,
a procedure in which the candidates are rated according to the opinions of the group.
%
%For example, a group of scientists might want to select the
%best $k$ scientists to fund \cite{Haze13a};
%a group of students
%studying a MOOC might be asked to grade
%each other \cite{PHCDNK13,LRP14}; or a group of players 
%might want to select the best player of the
%year from amongst themselves. 
%
In peer selection problems we are not seeking an external, ``independent'' agent to make choices, 
but desire a crowdsourced approach, in which participants are those making the selection.
Mechanisms for peer selection and the properties of these mechanisms receive 
considerable attention within economics, political science, and computer science \cite{AFPT11a,LGE10,HG90,FeKl14a,HoMo13a,Haze13a,KLMP15b,RRS11a}.

Our initial motivation comes from the recent U.S.~National Science Foundation (NSF) ``mechanism design pilot,'' which
was an attempt to spread the review load amongst all submitters of proposals \cite{Haze13a,MeSa09a}. 
The program uses \emph{``reviewers assigned from among the set of PIs whose proposals are
being reviewed.''}
% which \emph{``must provide both a written review and an ordering of the seven proposals to which they are assigned.''}
Reviewers' own proposals get \emph{``supplemented with `bonus points' depending upon the degree to
which his or her ranking agrees with the consensus ranking
(\cite{NSF14a}, Page 46).''} %\footnote{http://www.nsf.gov/nsb/publications/2015/nsb201514.pdf}. 
%\haris{Pls paraphrase the rule instead of this quote which is a bit awkward to parse.}
%\omer{shortened it to paraphrase mechanism. But why are we citing the report in this way, and with a footnote? Is there some issue with using a regular citation for this?}
%\nick{because I'm stupid? Fixed.}
This mechanism employed by the NSF is not strategyproof; reviewers are incentivized to guess what others are thinking, not to provide their honest feedback.  Hence the mechanism induces a type of Keynesian ``Beauty Contest'' \cite{Key36a} where the incentives are misaligned and humans have been shown to not behave truthfully \cite{CGMP15a}.
% \haris{It can be clearer as to what part of the sentence if the idea supported by the authors.}\omer{changed ``idea'' to ``incentive'' I think it's clear.}\nick{rewritten to make it more clear and add the fun Keynes references.}
 Removing the bonus may be worse, as reviewers would then be able to increase the chance of their own proposal being accepted by rating other proposals lower \cite{NaLi13a}. In either case, reviewers can \emph{benefit} from reporting something other than their true values. When agents have the incentive to misrepresent their truthful reports, the effect on the results of the aggregation or selection mechanism can be problematic. Indeed, in a comprehensive evaluation of the peer review process, Wenneras and Wold \cite{WeWo97a} wrote, \emph{``\ldots the development of peer-review systems with some built-in resistance to the weakness of human nature is therefore of high priority.''}

%%Practically, in order to limit the cost of 
%%There are several challenging issues with designing a peer selection mechanism. 
%peer review in terms of time and effort,
%%every agent has to rank every other agent.
%we may therefore ask
%each agent to review only a subset of other agents~\cite{MeSa09a}.
%However, a major problem is that 
%%Second, 
%the agents are naturally self-interested in
%the result and may try to manipulate the
%outcome in their favour. 
%The effects of manipulation have been shown to be so profound, causing Wenneras and Wold \cite{WeWo97a} to write in Nature,
%``The development of peer-review systems with some built-in resistance to the weakness of human nature is therefore of high priority''.

We propose a novel strategyproof, (which we shall also call impartial) mechanism\footnote{Strategyproof in the peer selection setting differs from the voting setting. In peer selection impartial means one cannot make themselves be selected if they wish to do so.} where
agents can never gain by being insincere.
% \haris{Need to clarify that SP and impartial are used equivalently and that it is not the same SP as used in voting. }\omer{Slightly amended text}
%
There are many reasons to prefer a strategyproof mechanism:
first, the mechanism does not favor ``sophisticated'' agents
who have the expertise to behave strategically.
Second, agents with partial or no knowledge of
other agents' rankings are not at a disadvantage when using a strategyproof mechanism.
Third, normative properties of a mechanism typically assume sincere behavior on the part of the agents. If agents act strategically,
we may lose some desirable normative properties. 
Fourth, it is, in general, easier to persuade
people to use a strategyproof mechanism
than one that can be (easily) manipulated.
%
%it is in agents' best interests to behave sincerely. 
Note that while strategyproofness does not handle all potential biases of agents, it eliminates an obvious ``weakness in human nature.''

To achieve strategyproofness we could
 use a lottery (as in the Athenian democracy). 
%as the method is strategyproof.
However, 
this method does not select based on merit.
A different option is to use a mechanism based on a voting rule.
However, following
Gibbard and Satterthwaite \cite{Gib73,Sat75},
any ``reasonable'' mechanism based on voting will 
not be strategyproof unless it is a dictatorship.
Another option is to employ a mechanism like
the Page Rank algorithm that uses Markov chains to compute a ranking 
of agents \cite{Wals14a}. However, such mechanisms are
also not strategyproof. 

%We propose here a new strategyproof mechanism for peer selection.

\paragraph{Contributions}

First, we propose a novel \emph{peer selection mechanism}, \edp, that satisfies
several desirable axiomatic properties including \emph{strategyproofness} and two natural \emph{monotonicity} properties. 
%Our new mechanism is derived from the Dollar mechanism to share a bonus~\cite{CMT08a}. 
%We show that three other natural peer selection mechanisms inspired from Dollar are not strategyproof. We also show that the
%Exact Dollar Partition has a number of advantages over previously introduced strategyproof mechanisms including the Partition mechanism and %the Credible Subset mechanism~\cite{KLMP15b}. For instance, unlike
%Credible Subset, Exact Dollar Partition never returns an empty set.
%
Second, we conduct a detailed experimental comparison with other strategyproof mechanisms with regard to their ability to recover the ``ground truth''. %This is the first experimental comparison of the performance of strategyproof peer selection mechanisms.
Our experiments demonstrate that \edp selects more high-quality agents more often, selects more high-quality agents in the worst case, and has more consistent quality than any other strategyproof mechanism in the literature. 
%In the worst case we found that Exact Dollar Partition provides a $\geq 17\%$ improvement over the agents selected by Partition, its closest competitor. Additionally, as the proportion of reviews per agent increases, Exact Dollar Partition performs increasingly better than Partition, Credible Subset, and the other Dollar adaptations.
%
Third, our mechanism uses a novel \emph{randomized apportionment} subroutine
to fairly round selected fractional group sizes to integers.
This subroutine is interesting in its own right as it provides a compelling solution to the fundamental problem of apportionment: allocating representatives or resources in proportion to group size or strength of demand. %that applies to committee voting and resource allocation. 
Young \cite{Youn94a} motivates the problem as follows: \emph{``This surprisingly difficult problem has concerned statesmen, political analysts and mathematicians for over two hundred years.''}

\section{Discussion and Related Work}

Peer review is the cornerstone of modern science and hence, the quality, veracity, and accuracy of peer review and peer evaluation is a topic of interest across a broad set of disciplines. Most empirical studies of peer review and peer selection focus on the effectiveness and limits of the system, typically by assembling large corpora of peer-reviewed proposals and cross-examining them with new panels or review processes \cite{CCS81a,LiLe15a}. Questions of bias, nepotism, sexism, cronyism, among other issues, have received extensive coverage, and have been substantiated to varying degrees, in the literature \citep{CCS81a,MEFF90a,WeWo97a}. However, a consistent conclusion in the meta-research on peer review is that, in order to decrease the role of chance and/or any systematic bias, the community needs to broaden the base of reviewers. Indeed, one way for the results of the review process to reflect the views of the entire scientific constituency and provide more value to the community is to increase the number of reviewers \cite{JoRa15a,Pri14}. The key scientific question lies in finding a mechanism that allows for crowdsourcing the work of reviewing, without compromising the incentives and quality of the peer review and selection process.

The criticism that prominent peer selection mechanisms such as those under consideration by American and European funding bodies~\cite{MeSa09a,Haze13a} are \emph{not} strategyproof \cite{NaLi13a} has underscored the need to devise mechanisms with better incentive properties. The literature most directly relevant to this article is a series of papers on strategyproof (impartial) selection~\cite{HoMo13a,AFPT11a} and more recently impartial ranking \cite{KKKKP18a}.  The explosive growth in computer science and machine learning conference submissions in the past years has led to more work in the computer science and machine learning fields are using data from large conferences \cite{STMG+18a} and even performing human experiments \cite{KKPK18a} to analyze the assignment \cite{LiMaNoWa18,SSS18a} and outcomes of various novel mechanisms for peer review.
% \haris{Need to clarify that SP and impartial are used equivalently and that it is not the same SP as used in voting. } \nick{Omer has done this above..}
We survey and provide details of these mechanisms in the next section.
Most of the work on strategyproof peer selection focuses on the setting in which agents simply approve (nominate) a subset of agents~\cite{AFPT11a,BNV14a,FeKl14a,HoMo13a}, with the latter three of these restricting attention to the setting in which exactly one agent is selected ($k=1$).\footnote{In \citet{AFPT11a} and \citet{KLMP15b} the letter $k$ is used to denote the number of partitions, in our paper and many others, $k$ designates the number of agents selected.  Therefore we use $\ell$ to denote the number of partitions in this paper and $k$ to denote the number of agents selected.}
% focused on 
A popular class of strategyproof peer selection mechanisms are Partition based mechanisms, as presented in \citet{AFPT11a}, where agents are divided into non-intersecting groups.
\citet{KLMP15b} present an interesting strategyproof mechanism (Credible Subset) that performs well when each agent reviews a very small number of agents relative to the total number of agents.
%\omer{Nick, isn't this statement wrong, and it never performs well, because so many cases cause an empty set to be selected?} \nick{it preforms really well when $m << |N|$ as the probability of getting empty is small... I am okay removing this statement completely but I was trying to be nice.}
Other recent work focuses on tradeoffs between different axioms concerning peer selection~\citep{BeGj14a,Mack15a}.
%. \citet{HoMo13a}, \citet{FeKl14a}, and \citet{BNV14a} additionally consider the setting in which exactly one agent is selected ($k=1$).

 Both \citet{AFPT11a} and \citet{HoMo13a} examine the selection problem in which agents simply approve (nominate) a subset of agents. \citet{HoMo13a}, \citet{FeKl14a}, and \citet{BNV14a} restrict their attention to a setting in which exactly one agent is selected ($k=1$).
 \citet{FeKl14a} also present the Permutation mechanism that achieves the same bound as the Partition mechanisms when only one agent is selected ($k=1$).
\citet{AFPT11a} and \citet{HoMo13a} showed that for the peer selection problem, deterministic impartial mechanisms are extremely limited, and must sometimes select an agent with zero nominations even though other agents receive nominations, or an agent with one nomination when another agent receives $n-1$ nominations~\cite{FeKl14a}. \citet{BFK15} built on this work to show that allowing a mechanism where agents simply approve of some subset of agents to select fewer than $k$ agents allows the mechanism to guarantee some bounds on the selected items---they are within about $1-\frac{1}{e}$ from the optimal selection. %\citet{AFPT11a} presented an impartial mechanism for the peer selection problem where agents are partitioned into equal sized clusters and roughly equal number of agents are selected from each cluster.
% In recent work inspired by \citet{AFPT11a}, \citet{FeKl14a} and \citet{BNV14a} considered the setting in which agents nominate some other agents and one agent is selected. 
%\citet{FeKl14a} improved on the analysis of the partition mechanism of \citet{AFPT11a}. They also presented a new ``permutation'' mechanism which achieves the same bound as the Partition mechanisms when $k=1$. \citet{BNV14a} further analyzed the Permutation mechanism of \citet{AFPT11a}. % and presented the slicing mechanism that performs well when the maximum number of votes of the most popular agent is high enough.
\citet{KLMP15b} present a more general mechanism called Credible Subset that is strategyproof but may select \emph{no winners} with non-zero probability. Credible Subset performs well when each agent reviews a few other agents, and this number is considerably smaller than $k$. %Our work is closely related to ranking systems in which agents rank each other~\cite{AlTe05b,AlTe07a,AlTe08a}.

There are a number of practical application areas that are related to and/or use peer selection.
The peer selection problem is closely related to peer-based grading/marking~\cite{DeSh13a,JoRa15a,KWLC13a,PHCDNK13,Robi01a,Wals14a,WTL15a} especially when students are graded
based on percentile scores.
% The peer selection problem is also related to peer-based grading/marking~\cite{Robi01a}.
% Many in the MOOC community are interested in mechanisms to scale up
% peer marking, and a number of mechanisms have been proposed~\cite{DeSh13a,RaJo14a,JoRa15a,KWLC13a,PHCD+13a,Wals14a,WTL15a}. The peer selection problem is applicable to this domain when students are graded
% based on percentile scores.
% on a curve, i.e., grades are assigned according to a predetermined distribution (e.g., average/median are fixed, etc.). 
%In this case, students may benefit from marking others untruthfully, and if students engage in collusion, they may report higher valuations so everyone passes. Correcting for these biases has been the study of many works in this area the above referenced and \cite{RRS11a,Bout12a}.
%especially if the grades are based on percentiles.
% The peer grading involves different incentives issues such as strictly incentivising disinterested or lazy students to give accurate peer grades.
For peer grading, mechanisms have been proposed that make a student's grade slightly dependent on the student's grading accuracy (see e.g., \citet{Wals14a} and \citet{MeSa09a}). However such mechanisms are not strategyproof as one may alter one's reviews to obtain a better personal grade.
Finally, as an additional and recent application area, economists have studied mechanisms and the strategic issues that arise in using peer evaluation for micro-financing and other reputation based resource allocation problems \cite{Baum18a,BlOl18a,HRR18a}.

\section{Setup and Survey of Existing Mechanisms}\label{mechSurvey}

%\subsection{Setup}

Given a set $N$ of agents $\{1,\ldots, n\}$ where each agent, depending on the setting, evaluates some $m$ of the other agents where $0 \leq m \leq n-1$.
Each agent reports a valuation (review) over the other agents (proposals). These reports could be cardinal valuations $v_i(j)$ for agent $i$'s valuations of agent $j$, or they could be a weak order reported by agent $i$ of agents in $N\setminus \{i\}$, which may be transformed to cardinal valuations using a scoring rule. Based on these reported evaluations, around $k$ agents are selected.
Some mechanisms, such as Credible Subset, 
%and Exact Dollar Partition
may not always return a size of exactly $k$ even if the \emph{target} size is $k$.

%We give an overview of different mechanisms.

A particular family of mechanisms with which we will deal are based on \emph{partitioning}. The general idea of partitioning-based mechanisms is to divide the agents into a set of clusters $\clusterset = \{C_1, \ldots, C_{\ell}\}$. This partition can be done using either a random process or some predetermined process that does not include randomization.
%We will often abuse notation and refer to the value that one cluster has for another cluster $v_{C_i}(C_j)$; the valuation of all agents in $C_i$ for the agents in $C_j$: $\sum_{r \in C_i, j \in C_j} v_r(j)$. 
We will assume that cluster sizes are such that selection from them is not a problem: for all $1\leq i\leq \ell$, $k\leq |C_{i}|$. If $\frac{N}{\ell}$ is not an integer then we assume that $k \leq \floor{\frac{N}{\ell}}$, the smallest cluster size.
% By this we mean the normalized sum of the valuation of all agents in $C_i$ for the agents in $C_j$, $\sum_{r \in C_i, j \in C_j} \frac{v_r(j)}{\sum_{k \in \{N \setminus C_i\}} v_r(k)}$.
%By this we mean the sum of the valuation of all agents in $C_i$ for the agents in $C_j$: $\sum_{r \in C_i, j \in C_j} v_r(j)$.

\subsection{Mechanisms}
There are three prominent mechanisms for peer selection that appear in the literature.

\begin{description}[itemsep=0.1cm,leftmargin=0.25cm]
\item[Vanilla:] %Each agent reviews $m$ other agents. 
Select the $k$ agents with the highest total value based on their reviews by other agents (as done today, for example, in many scientific conferences).
%Based on values that agents receive in their reviews, the $k$ agents with the highest total values are selected. 
Vanilla is not strategyproof; unselected agents have an incentive to lower their reported valuations of selected agents. 
\item[Partition:] Divide the agents into ${\ell}$ clusters and select a preset number of agents from each cluster, typically $k/\ell$ (rounded in some way if $k/\ell$ is not an integer), according to the valuations of the agents \emph{not} in that cluster. This class of mechanisms is a straightforward generalization of the Partition mechanism \citep{AFPT11a,FeKl14a} (and in an early version of \citet{KLMP15b}) which is strategyproof. 
\item[Credible Subset \cite{KLMP15b}:]
%Each agent reviews $m$ other agents; and $k$ out of a total of $n$ agents are selected. 
Let $T$ be the set of agents who have the top $k$ scores, as in Vanilla. 
Let $P$ be the set of agents who do not have the top $k$ scores but will make it to the top $k$ if they do not contribute any score to other agents (hence $|P|\leq m$). 
With probability $(k+|P|)/(k+m)$, Credible Subset selects a set of $k$ agents uniformly at random from $T\cup P$, and with probability $1-(k+|P|)/(k+m)$, it selects no one. The mechanism is strategyproof. 

%\haris{Review comments: When explaining the credible subset mechanism you claim that $|P| \geq m$. I am not able to verify this. Please explain.}
%\omer{I explained this in our response to the reviewer, but I don't think we should explain this here -- not our role to explain irrelevant elements of a different paper's proof in our paper.}

%This only makes sense if 
%
% The general idea of the last two mechanisms is to partition the agents into ${\ell}$ clusters $C_1,\ldots, C_{\ell}$, then find the share $x_i$ of each cluster $C_i$ by Markov chain over the clusters or by `Dividing the Dollar' mechanism, make the shares integers by using some apportionment mechanism, and then use the Partition approach to select the required number of agents from each cluster.
%
%
% \begin{remark}
% 	A randomized mechanism for $k=1$ can be adapted for general $k$ can be adapted as follows: keep rerunning the mechanism until $k$ different outcomes are produced. Question is whether impartiality is lost or maintained after doing this.
% 	\end{remark}
%
%
\end{description}
There are a number of other mechanisms that are tailor-made for $k=1$ and when agents only mark approval of a subset of agents: Partition~\citep{HoMo13a}; Permutation~\citep{FeKl14a}; and Slicing~\citep{BNV14a}.
%
%\haris{Review comments: It would be good to explain that in light of this definition, the 2-RP mechanism due to [2] and further analyzed in [16] corresponds to the case k=1 and l=2 when the partition is done uniformly at random.}
%
%\nick{In light of our definition of partitioning above, the 2-RP mechanism proposed by \citet{AFPT11a} and analyzed by \citet{FeKl14a} is equivalent to partition mechanisms when you force $k=1$, $\ell=2$, and the partitioning is done uniformly at random.}
%
%\omer{I support ignoring this reviewer comment.}
%
%
%resource that is not a peer selection mechanism but a mechanism to divide a Dollar of bonus based on agents' opinion on how much other agents contributed. 
%
%\noindent\\
%
%
When designing our mechanism, we were inspired by mechanisms for dividing a \emph{continuous resource} from the economics literature~\cite{CMT08a,TiPl08a}. In particular, we use ideas from the following mechanism.

\begin{description}[itemsep=0.1cm,leftmargin=0.25cm]
\item[Dividing a Dollar:] Each agent $i$ reports a value $v_{i}(j)$ that is his estimation of how much of the resource agent $j$ should receive. These values are normalized so that $\sum_{j\in N\setminus \{i\}}v_i(j)=1/n$. Hence, the \emph{Dollar share} of each agent $i$ is $x_i=\sum_{j\in N\setminus \{i\}}v_j(i)$.
 \end{description}

%	\item \emph{PeerRank}: Each agent reports cardinal valuations. Based on the weighted tournament represented by the valuations, the stationary distribution of the Markov chain is computed which gives the relative scores of the agents. By selecting the agents with the top $k$ scores, one can use PeerRank as a nomination rule. 

%The mechanisms of \citet{HoMo13a} can be ignored because seem to crucially depends on the assumption that each agent approved exactly one agent. The mechanisms of \cite{FeKl14a} and \cite{BNV14a} are for $k=1$ but maybe this is not reason to dismiss them because for more general $k$, one can simply rerun the mechanism until $k$ different outcomes are produced. \textbf{Question is whether impartiality is lost or maintained after doing this.}

\subsection{Properties of Mechanisms}

We consider some basic axioms of peer selection mechanisms. When algorithms involve randomization, these properties are with regard to the probability of selection.
\begin{description}
	% \item \emph{Unanimity}: If %cluster sizes are of size at least $2$ and
	% valuations are unanimous and $k=1$, the worst candidate will not be selected (this is true for Partition and Exact Dollar Partition).
	\item[Anonymity:] For some permutation of $n$ agents $\pi$, if $W$ is the outcome of the mechanism for agents $N$, with each agent $i$ giving a valuation on agents $i^{1},\ldots i^{m}$, then $\pi(W)$ is the outcome of the mechanism for agents $N$ where each agent $\pi(i)$ gives valuations on agents $\pi(i^{1}),\ldots,\pi(i^{m})$.
	\item[Non-imposition:] For any target set $W$, there is a valuation profile and a randomization seed that achieves $W$.
	\item[Strategyproofness (Impartiality):] Agents cannot affect their own selection. %\nick{Does it make sense to copy Omer's new footnote here as well?}
	\item[Monotonicity:] %\omer{Attempt for a re-write. Previous version commented out in \LaTeX.}\\
	If agent $i$ is selected, and some other agent $j$ reinforce it, increasing $i$'s relative position in her ranking without changing the relative position of other agents, then agent $i$ will still be selected. \footnote{When scores are used instead of ordinal rankings, we are, in a sense, converting them to ordinal rankings by looking at normalized scores, in which the sum of all scores is $1$. The property states that if only agent $i$'s normalized score is raised, it will still be selected.}
%	\haris{we are using reinforce, promotes and raise for referring to similar things but this part is not precise. Also we are using ranking to refer to an ordering as well an individual agent's relative position in the ordering.  We are also using ranking as well as ordinal ranking as if they may be different. }

%	\haris{What follows is not very precise and is quite long.} \nick{Do you have a proposal for new text or should Omer work some up?}
%	If an agent $i$ is selected, then if in a modified setting in which that agent is reinforced, i.e., the ordinal ranking of $i$ is improved with respect to other agents, by some agents in $N\setminus\{i\}$, and no other agent is reinforced, then $i$ will remain selected. When considering scores (and not rankings) we look at normalized scores, that is, the sum of scores given by each agent is identical. 
%\haris{Later when we say normalized we sometimes mean sum is 1 in other contexts so need to be clearer in various places.}	
%	The property becomes that If\haris{if}\nick{what are you trying to say here haris?}\omer{I believe this was a comment on capitalization of ``If'' originally. Fixed.} only agent $i$'s score is increased, then $i$ will remain selected (for non-normalized scores, the property requires not lowering other agents' scores).
	\item[Committee Monotonicity:] If $W$ is the outcome when the target set size is $k$, then all the agents in $W$ are still selected if the target set size is $k+1$. 
	\bigskip
	
\end{description}

\section{\edp}

% \haris{In the next para, the description is not very clear. One confusing part is that instead of $i$, there is use of they to avoid he or she. I would replace they by $i$ so things are clearer. }
The algorithm \edp is formally described in Algorithm \ref{algo:DP}. 
% \haris{There is a need to formally define Dollar Partition because we refer to such mechanisms and then say \edp is one special type of Dollar Partition. This can be made clearer in the paper. } \nick{Haris -- do you think we should introduce the old dollar partition first and talk about the rounding then do exact? I fixed the typo that was below as to the name of the algorithm and removed your comment to prevent confusion.}\omer{I think the confusion is that we use the term dollar partition later on when we mean exact dollar partition. I changed it accordingly. We have dollar-based mechanisms (explained in \S 6.1), but no family of ``dollar partition'' mechanisms.}
Broadly, it works as follows: agents are partitioned into $\ell$ clusters such that the sizes of clusters are equal or as near as possible, with difference at most $1$. Each agent $i\in N$ assigns a value $v_i(j)$ to each agent $i'$ that is among the $m$ agents that $i$ reviews, none of which are in $i$'s cluster. Agent $i$ may directly give a cardinal value to the agents they review or the cardinal value may be obtained by a scoring function that converts an ordinal ranking given by $i$ to cardinal values. In either case, the values that $i$ gives are normalized so that agent $i$ assigns a total value of 1 to the $m$ agents they are to review outside their own cluster. Based on the values from agents outside the cluster we assign the normalized weight (Dollar Share) $x_j$ to each cluster $C_j$. Based on each Dollar Share $x_j$, each cluster has a  quota $s_j=x_j \cdot k$ (possibly real but not rational). 
% which may be a rational number or $s_i \geq |C_i|$. 
If all $s_j$'s are integers, then each $s_j$ is the quota of cluster $C_j$, i.e., the top graded $s_j$ agents are selected from cluster $C_j$. 
If not all $s_j$ are integers, then we use the function \allocshares in line 7 (detailed in the next section) to enumerate discrete cluster allocations in which each cluster gets an allocation of either $\floor{s_{j}}$ or $\ceil{s_{j}}$. \allocshares then computes a probability distribution over these discrete allocations, requiring at most $\ell$ such allocations, so that the expected quota for each cluster will be exactly $s_{j}$. We draw a discrete allocation $(t_{1},\ldots,t_{\ell})$ using the distribution computed in \allocshares and select exactly the $t_j$ agents from each cluster $C_j$ with the highest score.
The agents who have a higher score will be referred to as having a higher ranking. We note that the algorithm gracefully handles the case where an agent is absent, i.e., does not submit their reviews, or if she gives zero score to every other agent that she is responsible for reviewing. The algorithm handles this case by forcing the agent to give equal score to all the other agents reviewed, i.e., $\frac{1}{m}$.

%\haris{Reviewer comment: Your algorithm seems to have difficulties handling the case of abstentions. ----We an say that if an agent gives zeros to everyone, we can translate that to equal score to everyone. ADDRESSED}

%algo freeze

		\begin{algorithm}[ht!]
		 \caption{\edp}
		 \label{algo:DP}
		\renewcommand{\algorithmicrequire}{\wordbox[l]{\textbf{Input}:}{\textbf{Output}:}}

		 \renewcommand{\algorithmicensure}{\wordbox[l]{\textbf{Output}:}{\textbf{Output}:}}
		\algsetup{linenodelimiter=\,}
		%\small
		%\footnotesize
		\begin{algorithmic}
			\REQUIRE Set of agents $N$, valuations $(v_1,\ldots, v_n)$ of the agents, $m$ the number of reviews per agent, and $\ell$ the number of clusters.
			\ENSURE Set of winning agents $W$.
		\end{algorithmic}
		 \begin{algorithmic}[1] 
		\STATE Initialize $W\gets \emptyset$
			\STATE Generate a partition $\{C_1,\ldots, C_{\ell}\}$ of $N$ where the difference between the sizes of any two clusters is at most 1.
			\STATE Each $i\in N$ reviews $m$ agents outside $C(i)$, where $C(i)$ is the cluster of agent $i$, so that any reviewed agent $j$ is assigned a valuation $v_{i}(j)$.
			% \haris{Strictly in terms of an algorithm we need to explain or speficy what info in being used and processed. }\omer{No idea what is meant by this comment.}
% 			\haris{It seems that the reviewer is now asked in this part of the algorithm to start reviewing.}
			
			%\haris{Reviewer comment: What is m? How do you choose them? ADDRESSED IN THE INPUT}
		%	\STATE Ensure $v_i(j)=0$ for $j\in C(i)$ and that $\sum_{j\notin C(i)}v_i(j)=1$ by setting the valuation of agent $i$ for agents in its cluster to $0$ and normalizing.
			\STATE Ensure $\sum_{j\notin C(i)}v_i(j)=1$ by normalizing. If $v_i(j)=0$ for all $j$, then we $v_i(j)=1/m$ for each $j$ reviewed by $i$. 
			 \STATE $x_i$, the value of a cluster $C_i$, is defined as:
			 \[x_i\gets \frac{1}{n} \times \sum_{j\in C_i, j'\notin C_i}v_{j'}(j).\]
			\COMMENT {Using the $x_i$ values, we now compute the number of agents $t_i$ to be chosen from each cluster $C_i$.}
			 \STATE Let each share $s_i\gets x_i \cdot k \text{ for each } i\in \{1,\ldots, \ell\}$.
		\STATE $(t_{1},\ldots,t_{\ell})\gets$ \allocshares({$s_{1},\ldots,s_{\ell}$})
		where $(t_{1},\ldots,t_{\ell})$ are the number of agents to be allocated from each cluster.

								\STATE For each $i\in C(i)$, the score of agent $i$ is
								 $\sum_{i'\notin C(i)}v_{i'}(i).$
								 \STATE Select $t_j$ agents with the highest scores from each cluster $C_j$ and place them in set $W$.
							
							\RETURN $W$
		 \end{algorithmic}
		\end{algorithm}

 We illustrate the working of our algorithm with the following example.

\begin{example}
Suppose we want to select $k=5$ winners from our agents, which are divided into four clusters, each with 2 agents, giving us $n=8$, each agent being responsible for reviewing $m=2$ other agents. Table~\ref{example1} shows the initial grades, on a scale of $0-100$ given by the row agent to their peers listed in the columns. Table~\ref{example2} shows these grades following normalization so that each agent distributes $1.0$ point to the $m=2$ agents they review.

\begin{table}[htp]
\caption{Example grades of row agent for column agent.}
\begin{center}
\begin{tabular}{l|c|c|c|c|c|c|c|c}
&A&B&C&D&E&F&G&H\\
\hline\hline
A [cluster 1] &&&&0&&&&100 \\
B [cluster 1] &&&80&&30&&& \\
C [cluster 2] &83&&&&&&42& \\
D [cluster 2] &&77&&&&50&& \\
E [cluster 3] &&&&65&&&65& \\
F [cluster 3] &&56&&&&&&98 \\
G [cluster 4] &29&&&&&62&& \\
H [cluster 4] &&&75&&29&&& \\
\end{tabular}
\end{center}
\label{example1}
\end{table}%

\begin{table}[h]
\caption{Example grades following normalization.}
\begin{center}
\begin{tabular}{l|c|c|c|c|c|c|c|c}
&A&B&C&D&E&F&G&H\\
\hline\hline
A [cluster 1] &&&&0&&&&1.00 \\
B [cluster 1] &&&0.7272&&0.2728&&& \\
C [cluster 2] &0.664&&&&&&0.336& \\
D [cluster 2] &&0.6063&&&&0.3937&& \\
E [cluster 3] &&&&0.50&&&0.50& \\
F [cluster 3] &&0.3636&&&&&&0.6364 \\
G [cluster 4] &0.3187&&&&&0.6813&& \\
H [cluster 4] &&&0.7212&&0.2788&&& \\
\end{tabular}
\end{center}
\label{example2}
\end{table}%

This means the overall scores are:
\begin{equation*}
\begin{split}
\text{Cluster 1: }&x_{1}= \frac{\sum_{j\in C_1, j'\notin C_1}v_{j'}(j)}{n} = \frac{1.9526}{8} = 0.244075\Rightarrow \\ &\text{Therefore, } s_{1}= x_1\cdot k = 1.220375\\
\text{Cluster 2: }&x_{2}= \frac{1.9484}{8} = 0.24355\Rightarrow s_{2}= 1.21775\\
\text{Cluster 3: }&x_{3}= \frac{1.6266}{8} = 0.203325\Rightarrow s_{3}=1.016625\\
\text{Cluster 4: }&x_{4}= \frac{2.4724}{8} = 0.30905\Rightarrow s_{4}=1.54525
%\text{Cluster 2: }&194.84\rightarrow x_{2}=0.24355\rightarrow s_{2}= x_2*k = 1.21775\\
%\text{Cluster 3: }&162.66\rightarrow x_{3}=0.203325\rightarrow s_{3}= x_3*k =1.016625\\
%\text{Cluster 4: }&247.24\rightarrow x_{4}=0.30905\rightarrow s_{4}= x_4*k =1.54525
\end{split}
\end{equation*}

This process leaves us with a share vector of $$\vec{s} = (1.220375, 1.21775, 1.016625, 1.54525).$$ While $\sum \vec{s} = k$ observe that not all the numbers are integers, leaving us the need to apportion the remainders. The function \allocshares is explored further in Example~\ref{allocationExample}, but for now, it suffices to know that since the number of agents for each cluster is rounded up or down, from one of the clusters we need to choose 2 agents, and 1 agent from the others. The highest probability is given to the event in which cluster 4 is the only one that will select 2 agents, giving us our allocation $\vec{t} = (1,1,1,2)$. This allocation vector leads to the selection of the agents $A, C, F, G, H$, which are the top-ranked agent in clusters 1, 2, and 3, and both agents of cluster 4.
\end{example}

We defer our proofs and analysis of the properties of the apportionment method--- function \allocshares---to the next section. For the analysis of the overall mechanism, it is enough to assume it chooses an allocation of size $k$ from a probability space constructed so that the expected share of each cluster $j$ is $s_{j}$. As no agent is treated differently in the mechanism, \edp is anonymous and satisfies non-imposition. % \haris{Is non-imposition implies by the fact that no agent is treated differently in the mechanism?}\omer{Also since if each agent in $W$ is assigned top marks by all its reviewers, and other agents aren't, and $W$ is equally divided amongst clusters, then it will be selected. We can even tweak things such that shares are always integers.}
 
%In the supplemental material, we show that Exact Dollar Partition retains key normative properties that are important for any peer selection setting.
%
%\begin{theorem}
%	Exact Dollar Partition is anonymous.
%\end{theorem}
%
%We now argue that Exact Dollar Partition is strategyproof.
%
\begin{theorem}\label{th:edp-sp}
	\edp is strategyproof.
	\end{theorem}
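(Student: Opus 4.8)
The plan is to show that no agent $i$ can change the probability that $i$ is selected by altering the valuations it reports; the whole argument rests on the structural fact that $i$ reviews only agents \emph{outside} its own cluster $C(i)$, so $i$'s report can influence the other clusters but never the internal data of $C(i)$.

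First I would isolate the quantities that decide $i$'s fate. Agent $i$ is selected exactly when the number $t_{C(i)}$ of slots allocated to $C(i)$ is at least $i$'s rank within $C(i)$, where agents in a cluster are ranked by the score $\sum_{i'\notin C(i)} v_{i'}(i)$. I would then observe that both of these quantities are immune to $i$'s report. The score of \emph{every} agent in $C(i)$ (including $i$ itself) is a sum over reviewers $i'\notin C(i)$, so $i$'s own valuations never enter, and hence $i$'s rank $r$ inside its cluster is fixed. Likewise the cluster value $x_{C(i)}=\frac1n\sum_{j\in C(i),\,j'\notin C(i)} v_{j'}(j)$ sums only over reviewers $j'\notin C(i)$; thus $i$'s report leaves $x_{C(i)}$, and therefore the share $s_{C(i)}=x_{C(i)}\cdot k$, unchanged.

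The delicate step, which I expect to be the main obstacle, is that $i$'s report \emph{can} change the other shares $s_j$ for $j\neq C(i)$, while the allocation $(t_1,\ldots,t_\ell)$ is produced jointly from all shares at once; so a priori $i$ might perturb the law of $t_{C(i)}$ through the back door of the other clusters. Here I would invoke the property assumed of \textsc{AllocationFromShares}: each $t_j\in\{\floor{s_j},\ceil{s_j}\}$ and $\mathbb{E}[t_j]=s_j$. Applied to $j=C(i)$, this pins down the entire \emph{marginal} law of $t_{C(i)}$. Indeed $t_{C(i)}$ is supported on the two consecutive integers $\floor{s_{C(i)}}$ and $\ceil{s_{C(i)}}$, and a two-point distribution on consecutive integers is determined by its mean: writing $p=\Pr[t_{C(i)}=\ceil{s_{C(i)}}]$ forces $p=s_{C(i)}-\floor{s_{C(i)}}$ (and $t_{C(i)}$ is deterministic when $s_{C(i)}$ is an integer). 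Since $s_{C(i)}$ is fixed, this marginal is fixed however $i$ manipulates the remaining shares.

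Finally I would assemble the pieces: the probability that $i$ is selected equals $\Pr[t_{C(i)}\ge r]$, and both $r$ and the marginal distribution of $t_{C(i)}$ are independent of $i$'s report. Hence $i$ cannot alter its selection probability, which is precisely strategyproofness.
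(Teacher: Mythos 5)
Your proof is correct and follows essentially the same route as the paper: fix the agent's intra-cluster rank and its cluster's share $s_{C(i)}$ (both immune to $i$'s report), then use the fact that \textsc{AllocationFromShares} gives $t_{C(i)}$ support $\{\floor{s_{C(i)}},\ceil{s_{C(i)}}\}$ with mean $s_{C(i)}$, which uniquely determines the marginal law of $t_{C(i)}$ regardless of how $i$ perturbs the other shares. Your explicit handling of the ``delicate step'' (that the allocation is drawn jointly from all shares, so the marginal must be pinned down by mean plus two-point support) is exactly the paper's parenthetical remark that the probabilities of $\floor{s_j}$ and $\ceil{s_j}$ are the unique ones achieving expectation $s_j$, just spelled out more carefully.
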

	
	\begin{proof}
		Suppose agent $i$ is in cluster $C_j$ of the generated partition. Agent $i$ will be selected in $W$ if and only if its score is among the top $t_j$ scores from agents in $C_j$. Therefore agent $i$ can manipulate either by increasing $t_j$ or by increasing its score relative to other agents in $C_j$ given by agents outside $C_j$. Since agent $i$ cannot affect the latter, the only way it can manipulate is by increasing $t_j$. 
		We argue that agent $i$ cannot change its \emph{expected} $t_j$ by changing its valuation $v_i$ for agents outside the cluster. Note that $i$ contributes a probability weight of $1/n$ to agents outside $C_j$ and zero probability weight to agents in $C_j$. Hence it cannot affect the value $x_j$ of cluster $C_j$. As $s_j$ is derived from $x_j$, agent $i$ cannot affect $s_j$.

% \haris{This forward reference looks odd.} \nick{Agreed but I wouldn't move the theorem here either to fix this...}
As we will show in our analysis of \allocshares, specifically Theorem \ref{allocationWorks}, the expected value of $t_{j}$ will be $s_{j}$ (and its value is either $\floor{s_j}$ or $\ceil{s_j}$, in the unique probabilities that make the expected value $s_{j}$). Since any agent in cluster $C_j$ that changes its report does not affect $s_{j}$, it does not affect the expected $t_j$, nor the respective probabilities of getting $\floor{s_j}$ and $\ceil{s_j}$.
Hence, agent $i$ cannot manipulate by either increasing the $t_j$ of his cluster or by increasing his score relative to the agents in $C_j$. Therefore, \edp is strategyproof.

%\haris{Reviewer comment: When talking about ranks, for example in the proof of Theorem 1, shouldn't it be that rank 1 is better than rank 2 and so on? If x and y represent two ranks, it should be that $x<y$ means that x is better than y, and not vice versa. I THINK WE NEED TO CLARIFY THE USE OF LESS THAN OR BETTER THAN IN OTHER WORDS WHAT DOES $<$ REPRESENT THROUGHOUT THE PAPER. \textbf{I have fixed in two placed by replacing higher with better.} }

%\haris{Reviewer comments: While I believe that the statement of Theorem 1 is correct, I have difficulties following the proof. }

%\omer{I think (hope?) both our edits fixed this.}

\end{proof}

\begin{remark}
	\edp is not just strategyproof but even group-strategyproof if manipulating coalitions involve agents from the same cluster (so potentially colluding agents, e.g., with conflict of interest, can be put in the same cluster). 
	\end{remark}
%	
%		Exact Dollar Partition easily satisfies non-imposition; we show it satisfies other
%		key monotonicity properties.
%	
%
%
			\begin{theorem}\label{th:edp-monotonic}
				\edp is monotonic.
				\end{theorem}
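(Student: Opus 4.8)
The plan is to track exactly how a reinforcement of agent $i$ propagates through the two quantities that jointly decide whether $i$ wins. Suppose $i$ lies in cluster $C_j$. By the selection rule, $i\in W$ precisely when the rank of $i$ inside $C_j$ (by the scores $\sum_{i'\notin C(i)} v_{i'}(i)$) is at most $t_j$, so the selection probability equals $\Pr[t_j \ge r_i]$, where $r_i$ is $i$'s rank within $C_j$. Since a reinforcement only raises the value (or ordinal position) that some reviewers $i'\notin C_j$ assign to $i$, I would show separately that (a) $r_i$ cannot increase and (b) the law of $t_j$ cannot shift downward, and then combine the two.

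For (a): each reinforcing reviewer $i'$ moves weight onto $i$ and, by the normalization $\sum_{a\notin C(i')} v_{i'}(a)=1$, removes the same total weight from other agents it reviews (in the ordinal reading, $i$ is merely swapped upward past a single agent). Hence $i$'s score weakly increases while the score of no other member of $C_j$ increases from these changes, so $r_i$ weakly decreases. For (b): I would do the bookkeeping on $x_j=\frac{1}{n}\sum_{a\in C_j,\, b\notin C_j} v_b(a)$. If the weight shifted to $i$ was taken from another member of $C_j$, it stays inside $C_j$ and $x_j$ is unchanged; if it was taken from an agent outside $C_j$, then $x_j$ strictly increases. In either case $x_j$, and hence $s_j=x_j\cdot k$, weakly increases, regardless of what happens to the other clusters' shares.

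Finally I would invoke \thmref{allocationWorks}: $t_j$ is supported on $\{\floor{s_j},\ceil{s_j}\}$ with expectation $s_j$, so its marginal law is forced to be the unique two-point distribution with $\Pr[t_j=\ceil{s_j}]=s_j-\floor{s_j}$, which depends only on $s_j$ and is stochastically nondecreasing in $s_j$ (the single integer-crossing case is checked by hand). Combining the pieces, if $s_j$ weakly increases and $r_i$ weakly decreases, then $\Pr[t_j\ge r_i]$ cannot drop, which is exactly monotonicity. The main obstacle is the normalization accounting in step (b)---making sure a reinforcement can never route weight \emph{out} of $C_j$ while still improving $i$'s rank---together with confirming that only the \emph{marginal} distribution of $t_j$ (which is pinned down by $s_j$ alone) is relevant, so that the simultaneous decreases forced on other clusters' shares are immaterial to $i$.
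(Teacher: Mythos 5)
Your proof is correct and follows essentially the same decomposition as the paper's: show that $i$'s rank within its cluster weakly improves and that the cluster's share $s_j$ weakly increases, then conclude. You are in fact more careful than the paper on one point---the paper simply asserts that ``the total valuation that $C(i)$ receives does not decrease and hence the number of agents selected from $C(i)$ is at least as high as before,'' whereas you explicitly verify that the marginal law of $t_j$ is pinned down by $s_j$ alone (the two-point distribution on $\{\floor{s_j},\ceil{s_j}\}$ with mean $s_j$) and is stochastically nondecreasing in $s_j$, which is the step that actually makes the conclusion valid under the randomized apportionment.
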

				\begin{proof}
					Let us compare the valuation profile $v$ when $i$ is not reinforced and $v'$ when $i$ is reinforced.
	The relative ranking of $i$ is at least as good when $i$ is reinforced. Since any decrease in valuation that an agent $j$ in $C(i)$ receives translates into the same increase in the valuation received by agent $i$, the total valuation that $C(i)$ receives does not decrease and hence the number of agents selected from $C(i)$ is at least as high as before.
			%
		% The share of $i$'s cluster of is at least as large for $v'$ as it was for $v$ (the same if changes were for agents in the same cluster. Strictly larger if they in different ones). Furthermore, the relative position of $i$ in its cluster is at least as good for $v'$ as for $v$. Hence, probability of $i$ being selected for valuation profile $v'$ is at least as much as that of $i$ being selected for valuation profile $v$.
\end{proof}

		\begin{theorem}\label{th:edp-commt-mono}
			\edp is committee monotonic.
			\end{theorem}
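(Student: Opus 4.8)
The plan is to prove committee monotonicity by showing that increasing the target size from $k$ to $k+1$ can only increase the number of agents selected from each cluster, which combined with the fact that selection within a cluster is by score ranking, guarantees that every previously selected agent remains selected. The key structural observation is that the Dollar Shares $x_j$ are determined purely by the valuations and the partition, and are completely independent of the target size $k$. Therefore, when we move from target size $k$ to $k+1$, each cluster's real-valued share $s_j = x_j \cdot k$ becomes $s_j' = x_j \cdot (k+1)$, and since $x_j \geq 0$, we have $s_j' \geq s_j$ for every cluster.

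First I would fix the partition and the valuation profile (these do not change with $k$), so the per-cluster scores and the score-based rankings within each cluster are identical in both instances; thus it suffices to prove that the number $t_j$ of agents chosen from each cluster $C_j$ does not decrease when $k$ increases by one. Since within a cluster agents are selected strictly in order of decreasing score, an agent selected at target size $k$ occupies one of the top $t_j$ slots in $C_j$, and it will remain selected at target size $k+1$ provided $t_j$ (for its cluster) does not decrease. The subtlety is that $t_j$ is a \emph{random} quantity drawn from the distribution produced by \textsc{AllocationFromShares}, so committee monotonicity must be interpreted, as the paper does for randomized mechanisms, in terms of probability of selection: I would argue that the probability that at least $r$ agents are selected from $C_j$ does not decrease as $k$ grows, for every threshold $r$, which is exactly stochastic dominance of the allocation distribution.

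The main obstacle, and the crux of the argument, is establishing this stochastic-dominance property of the apportionment subroutine: I would invoke \thmref{allocationWorks} to assert that \textsc{AllocationFromShares} returns, for each cluster, the value $\floor{s_j}$ or $\ceil{s_j}$ with the unique probabilities making the expectation exactly $s_j$. Because $\floor{s_j'} \geq \floor{s_j}$ and $\ceil{s_j'} \geq \ceil{s_j}$ (both floor and ceiling are monotone nondecreasing in $s_j$), and because the expected value rises from $s_j$ to $s_j' \geq s_j$, the support of the allocation for each cluster shifts weakly upward; hence the per-cluster marginal distribution at target size $k+1$ stochastically dominates that at target size $k$. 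This makes the probability of each already-selected agent being chosen at least as high at $k+1$ as at $k$, which is precisely committee monotonicity in the randomized sense.

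I should flag one point that the proof must handle cleanly: the per-cluster marginal distribution is what matters for an individual agent's selection probability, so I would rely only on the marginal behavior of $t_j$ rather than on the joint distribution over $(t_1,\ldots,t_\ell)$; the joint coupling used to enforce $\sum_j t_j = k$ is irrelevant to whether a single fixed agent remains selected. Thus the heart of the proof reduces to the elementary monotonicity of the two-point rounding distribution in its mean $s_j$, together with the $k$-independence of $x_j$, and these two facts deliver the result.
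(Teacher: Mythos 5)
Your proposal is correct and follows essentially the same route as the paper's proof: the Dollar shares $x_j$ are independent of $k$, so each $s_j = x_j\cdot k$ can only grow when $k$ increases, and hence each cluster's quota $t_j$ (and with it the set of selected agents, since selection within a cluster is by score rank) can only expand. The one difference is that you treat the randomness of $t_j$ explicitly via stochastic dominance of the two-point marginal $\{\floor{s_j},\ceil{s_j}\}$ in its mean, whereas the paper's proof simply asserts that $t_j$ increases; your version is the more careful reading of the same argument under the paper's convention that randomized properties are stated in terms of selection probabilities.
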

						\begin{proof}
	The only difference between running the algorithm for different target $k$ values is when calculating the quota vector $\vec{s}$. However, if agent $i$ in cluster $C_{j}$ was selected, that means its ranking in the cluster $C_{j}$ was above $t_{j}$. When $k$ increases, $s_{j}$ will only increase (as $x_{j}$ remains the same), and hence so will $t_{j}$, ensuring that $i$ will be selected again.% I.e., in any partition in which $i$ was selected for a smaller $k$, it will be selected with a larger one.
\end{proof}

\section{A Randomized Apportionment Rule}

The randomized allocation technique we call \allocshares used for \edp is of independent interest since it addresses the classic apportionment problem in a randomized way. Consider the problem in which $n$ agents divided into $\ell$ disjoint groups are to be allocated a given number of slots $k<n$ in proportion to the group sizes. The problem is ubiquitous in apportionment settings such as proportional representation of seats in the U.S.~congress, European Parliament, and the German Bundestag as well as various other committee selection settings~\cite{BaYo82a,BaYo80a,Birk76a,Mayb78a,Puke14a}. This problem has been studied in political science, economics, operations research, and computer science for over 200 years \cite{Youn94a}. 

In these settings, each group $i$ has a quota $s_i$ with $\sum_{i=1}^ns_i=k$, which we call its \emph{target quota}. Since $s_i$ may not be an integer, we have to resort to \emph{apportionment}, which means that in order to allocate exactly $k$ slots, some group may be assigned an integer quota slightly more or less that its target quota. 

Numerous apportionment procedures have been introduced in the literature including the methods of Hamilton, Jefferson, Webster, Adams, and Hill~\cite{BaYo82a}; each with its own drawbacks. In fact, \citet{BaYo82a} proved that \emph{no} deterministic apportionment procedure can satisfy a group of three minimal axioms: (1) \emph{Quota Rule}, each group should get quota that is the result of the target quota being rounded up or down; (2) \emph{Committee Monotonicity}, if $k$ increases then the quotas do not decrease; and (3) \emph{Monotonicity}, if $s_i<s_j$ and the 
quotas are perturbed such that the percentage increase of $s_i$ is more than the percentage increase of $s_j$, then $i$ should not lose a slot to $j$.
%if group $i$'s size increases and group $j$'s size decreases by the same margin, then there is no transfer of quota from $i$ to $j$. 
We call discrete quota allocations that satisfy the quota rule and allocate exactly $k$ slots as \emph{nice allocations}. 

% \haris{Reviewer comment: There seems to be some inaccuracies in the description of the axioms for apportionment rules from [4]. Axiom (3) requires that a certain monotonicity when changing group sizes. However, the target quotas are defined are independent of any group size, so I do not understand how group sizes should have any relationship with the allocated quotas. ADDRESSED.
% }

\subsection{Curse of Determinism: The Need for Randomization}

%Note that the amount of randomization involves in Exact Dollar Partition only affects one extra slot in the clusters with fractional quotas. 
%We now show that randomization is needed for such a strategyproof mechanism.
%In view of Balinski and Young's~\cite{BaYo82a} result, 

% \haris{Reviewer comment: Can you precise how it related to Thm4?
% ADDRESSED}
We first demonstrate that randomization is a necessary feature of an apportionment mechanism in order to select exactly $k$ agents and strategyproofness. Therefore, any deterministic and strategyproof method of using fractional quotas to derive integer quotas can result in outcomes that are not of the target size (e.g.,~\cite{ALM+16a}).

% \begin{theorem}\label{deterministicMechImpossibility}
% There is no anonymous deterministic allocation of quotas that guarantees selection of $k$ agents.
% \end{theorem}
%
% \begin{proof}
% Let $\ell=3$, with clusters being of equal size. All agents in cluster $1$ rank agents in cluster $2$ before any in cluster $3$; agents in cluster $2$ rank those in cluster $3$ ahead of cluster $1$; and those in cluster $3$ rank agents in cluster $1$ ahead of those in $2$. So share of each cluster is equal, and for $k<\ell$ there is no deterministic anonymous way to allocate quotas.
% \end{proof}

%Moreover, any deterministic method of using fractional quotas to derive integer quotas involves some consistent rounding approach, which can result in outcomes that are not of the target size (e.g.,~\cite{ALM+16a}):

\begin{theorem}\label{deterministicRoundImpossibility}
% No deterministic rounding of quotas that guarantees selection of $k$ agents is strategyproof.
% \haris{At this point the class of ``Dollar Partition Methods'' has not been defined so this theorem looks a bit odd. }\omer{Indeed. Changed phrasing.}
No Partition-based method, which assigns non-integer quotas to each cluster can select exactly $k$ agents by rounding the quotas in a deterministically strategyproof way.
\end{theorem}

% \haris{Reviewer comment: Theorem 4 should be stated in a more formal way. I think you mean that when replacing in Algorithm 1 the subroutine AllocationFromShares by an oracle that converts real-valued shares into integer shares, then there is no deterministic oracle such that the resulting mechanism is strategyproof. If this is what is meant, then it should be stated like this.
% }

We first prove the following lemma.

%\haris{reviewer comment: I have several difficulties understanding Lemma 1. }

%\omer{I hope we've addressed the problems.}
\begin{lemma}\label{sharesSingleValue}
In a deterministic strategyproof allocation mechanism that selects $k$ agents from $\ell$ clusters, the number of agents chosen from cluster $i$ with a share $s_{i}$ will not change, regardless of the rest of the shares.
\end{lemma}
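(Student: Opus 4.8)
The plan is to prove the lemma directly: assuming the mechanism is strategyproof in the sense that no agent can affect its own selection, I will show that the integer allocation $t_j$ assigned to a cluster $C_j$ is forced to be a function of that cluster's share $s_j$ alone. The starting observation, borrowed from the reasoning behind \thmref{th:edp-sp}, is that an agent $a$ sitting in cluster $C_j$ has very limited influence on the data fed to the rounding rule: because $a$ reviews only agents outside $C_j$, it can neither change its own cluster's share $s_j$ nor alter the within-cluster ranking of the members of $C_j$ (its own rank included). The one thing $a$'s report can move is the vector of \emph{other} clusters' shares $(s_{j'})_{j'\neq j}$, and it can move it only within a budget, redistributing a total share mass of $k/n$ among the clusters it reviews.

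The first real step is to turn ``cannot affect own selection'' into ``cannot move $t_j$''. Since the rounding is deterministic, $t_j$ is a well-defined function of the full share vector $\vec{s}=(s_1,\ldots,s_\ell)$, and $a$ is selected exactly when its rank in $C_j$ is at most $t_j$. By choosing the external reviews I can place $a$ at any prescribed rank $r$ in $C_j$, and strategyproofness then forces the truth value of ``$r\le t_j$'' to be invariant under every report $a$ could submit. Letting $r$ range over all ranks, this forces the integer $t_j$ itself to be invariant under any change of the remaining shares that a single agent of $C_j$ can produce while $s_j$ is held fixed.

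The second step is to promote this \emph{local} invariance into the \emph{global} statement of the lemma. Fix $s_j$ and take any two share vectors $\vec{s}$ and $\vec{s}'$ agreeing in the $j$-th coordinate. Both live in the set $\{(s_{j'})_{j'\neq j} : s_{j'}\ge 0,\ \sum_{j'\neq j}s_{j'}=k-s_j\}$, which is convex and hence path-connected, so I can join them by a sequence of small moves, each merely shifting a little mass from one cluster to another while keeping $s_j$ and the total $k-s_j$ fixed. Each elementary move is realizable, in a suitably constructed instance, as a unilateral deviation of a single agent of $C_j$: I set the other reviewers and give that agent a report with slack in both directions, so that its current report makes the aggregate land on one intermediate vector while a nearby report of the same agent lands on the next. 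By the previous step $t_j$ is unchanged across this deviation, and since $t_j$ is a function of the share vector alone, chaining the equalities yields $t_j(\vec{s})=t_j(\vec{s}')$. Hence $t_j$ cannot depend on anything but $s_j$.

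The step I expect to be the main obstacle is exactly this last realizability bookkeeping: I must guarantee that each elementary share move respects the reviewing constraints --- a single agent controls only $k/n$ of mass and only over the clusters it actually reviews --- and that every intermediate share vector is the aggregate of some legitimate normalized report profile. Keeping the moves small makes the budget $k/n$ more than sufficient, and the freedom to choose the instance (which clusters the designated agent of $C_j$ reviews, and how much slack it is given) lets me route each move through clusters that agent can touch; the remaining normalization and nonnegativity constraints are then routine to satisfy.
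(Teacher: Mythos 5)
Your proof follows essentially the same route as the paper's: first show that a unilateral deviation by a single agent of cluster $i$ cannot change $t_i$ (via strategyproofness, with that agent placed at the pivotal rank), then chain a sequence of such small, single-agent-realizable perturbations to connect any two share vectors agreeing in coordinate $i$. The paper leaves the pivotal-rank step implicit and phrases the chaining as an explicit sequence of intermediate share vectors each differing by total mass at most $2/k$, but the substance is identical, and your explicit attention to the realizability bookkeeping is if anything slightly more careful than the published argument.
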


\begin{proof}
Let $s=(s_{1},\ldots,s_{\ell})$ be the shares for each cluster, under which a mechanism allocates $y$ agents from cluster $i$. Now, let $s'=(s'_{1},\ldots,s'_{i-1},s_{i},s'_{i+1},\ldots,s'_{\ell})$ be a different share allocation. We wish to show that the number of agents selected from cluster $i$ remains $y$.

We know $\sum_{j\neq i}s_{j}=\sum_{j\neq i}s'_{j}$. If $\sum_{j\neq i}|(s_{j}-s'_{j})|\leq\frac{2}{n}$, this means a single agent can cause the change from $s$ to $s'$. As the share values do not contain data on actual agents votes, that single agent could be in cluster $i$ (since $s_{i}$ did not change at all). Thanks to strategyproofness, this means there is no change in the number of agents selected from cluster $i$---it is still $y$ agents. 

If $\sum_{j\neq i}|(s_{j}-s'_{j})|>\frac{2}{n}$, we make the move from $s$ to $s'$ using intermediary steps, $s^{0},\ldots,s^{h}$ such that $a^{j}=(s^{j}_{1},\ldots,s^{j}_{\ell})$ a share allocation where $s^{j}_{i}=s_{i}$, $s^{0}=s$, $s^{h}=s'$, and for $1\leq t\leq h$, $\sum_{j\neq i}|(s^{t}_{j}-s^{t-1}_{j})|\leq \frac{2}{n}$. Thanks to the argument in the previous paragraph, the number of agents selected from cluster $i$ stays $y$ in $s^{1}$. Now we can look at $s^{1}$ and $s^{2}$ by themselves, and due to the same argument, the number of agents from cluster $i$ needs to be the same in $s^{1}$ and $s^{2}$, hence it is still $y$ in $s^{2}$. We apply this argument again and again, until we reach the point where the number of agents selected from cluster $i$ in $s^{h}=s'$ is $y$ as well.
\end{proof}

\begin{proof}[Proof of Theorem~\ref{deterministicRoundImpossibility}]
Suppose there is a rounding of quotas that guarantees the selection of $k$ agents. Let us assume $k$ clusters and $k>3$ is odd. Using Lemma~\ref{sharesSingleValue}, we know that each cluster's slot allocation is fixed according to its share, regardless of other clusters' share. Hence, for each cluster with a share of $1.5$, it receives an allocation of either $1$ slot or $2$.

\paragraph{Case I: There are 2 clusters that are allocated $2$ slots when their share is $1.5$}

Suppose these 2 clusters have a share of $1.5$, some other cluster has share $0$, and all remaining clusters have share $1$. Hence we had $k$ shares, but received $k+1$ slot allocations.

\paragraph{Case II: There are 2 clusters that are allocated $1$ slot when their share is $1.5$}

Suppose these 2 clusters have a share of $1.5$, some other cluster has share $0$, and all remaining clusters have share $1$. Hence we had $k$ shares, but received $k-1$ slot allocations.
\end{proof}

Determinism does not just prevent strategyproof mechanisms, but also anonymous ones as shown by the following Theorem.
% \haris{This sounds like an after thought. }\omer{Changed it to what I think is better, but you're welcome to change it if you have an idea for a phrasing that sounds better.}

%\haris{Reviewer comment: Since Theorem 5 makes a statement about anonymous mechanisms, it might be good to define this notion formally. One has to be careful here since when clustering the agents uniformly at random, ties can be broken in favor of a cluster without violating anonymity.}

%\omer{I think the current proof addresses these points.}

\begin{theorem}\label{deterministicMechImpossibility}
% There is no anonymous deterministic allocation of quotas that guarantees selection of $k$ agents.
No Partition-based method, which assigns non-integer quotas to each cluster can select exactly $k$ agents by rounding the quotas in a deterministically anonymous way.

\end{theorem}
\begin{proof}
Let $\ell=3$, with clusters being of equal size. All agents in cluster $1$ rank agents in cluster $2$ before any in cluster $3$; agents in cluster $2$ rank those in cluster $3$ ahead of cluster $1$; and those in cluster $3$ rank agents in cluster $1$ ahead of those in $2$. So share of each cluster is equal, and for $k<\ell$ there is no deterministic anonymous way to allocate quotas.
%Thanks to anonymity and determinism we know that a cluster's allocation is fixed for a particular share, regardless of other agents' shares (since . Let us examine the case where $k=\ell$, $k\geq 5$ and $k$ divides by 3. When a cluster has a share of $1.5$ it is allocated $y$.
%\paragraph{Case I: $y\geq 2$}
%
%Assign a share of $1.5$ to $\floor{\frac{\ell}{2}}+1$ clusters. Since $\ell\geq 5$, $1.5(\floor{\frac{\ell}{2}}+1)<k$. The rest of the shares are allocated between the other clusters. However, the number of slots allocated is $y(\floor{\frac{\ell}{2}}+1)\geq2(\floor{\frac{\ell}{2}}+1)>k$, creating on overly large allocation.
%\paragraph{Case II: $y\leq 1$}
%
%Since $k$ is divisible by 3, there is $s\in\mathbb{N}$, $s<k$ such that $1.5s=k$. We allocate a share of $1.5$ to $s$ clusters. However, we have allocated only $s$ slots, which are too few.
\end{proof}

\subsection{A Novel Randomized Apportionment Rule}

We resort to randomization to achieve \emph{ex ante} fairness and the target size. Using randomization, our goal is to ensure that the target number of total agents chosen is exactly $k$ \emph{ex post}.
% \haris{If something is $k$ ex post, then does it not already imply that it is $k$ in expectation??}\omer{good point. Eliminated}
 Therefore we will require that the integer quota allocation returned by the lottery is a nice allocation. 
%The goal is that not only should each group's quota be equal to the target quota in expectation but in an ex post sense should be either a rounding up or down of the target quota. Similarly, 
%
%A possible objection to randomization is that it does not solve the problem ex post. While this is true, 
Randomization has been used in various settings such as voting and fair allocation of indivisible goods to achieve \emph{ex ante} as well as \emph{procedural} fairness~\citep{BCKM12a,BoMo01a,Gibb77a}. 

One possible way to achieve the appropriate randomization is to enumerate all the feasible discrete quota allocations and then solve equations to find the probability distribution over these quota allocations. If such a probability distribution exists, the method outlined involves enumerating an exponential number of such quota allocations that is computationally infeasible if the number of groups is large (for example, in U.S.~elections, $\ell$ is 50). Hence, some suggested approaches to this problem require multiple rounds of randomization and also do not enumerate the possible \emph{ex post} outcomes (there may be an exponential number of them)~\citep{RKPS06a}.
%
% Randomization also prevents groups from acting strategically and lobbying for the deterministic apportionment rule that favours the more as has been the case throughout the US Congress history seat allocations for the states. 
%
% \nick{About here we need to add the comment about rounding from the FOCS / B-Matching literature.}
% \haris{I've added a generic sentence. }
Previously, a stochastic apportionment rule was presented that achieves the quota requirements~\cite{Grim04a} by two randomizations, one of them using a stochastic continuous variable. This means that it does not involve a probability distribution over discrete nice allocations, hence it cannot be used to achieve fairness via repeated representation. Moreover, due to computers not being able to reproduce truly continuous values, this may compromise strategyproofness. 
% \haris{The previous sentence can be made into shorter sentences. }\omer{Divided into 2.}
%There is also work in theoretical computer science on randomized rounding~\citep{RKPS06a} but such approaches achieve the target sum of quotas ex ante but not ex post (a requirement for nice allocations). 
%\haris{omer, i've entered some additional arguments from the sigecom paper. Pls check if they are okay.}

In view of these challenges, we 
present a simple method \allocshares that achieves the target quotas, relies on a probability distribution over a \emph{linear number} of nice allocations, the probability distribution can be computed in linear time, and requires minimal randomization (only one round). Our randomized procedure can be easily de-randomized in repetitive settings. In frequently repeated allocation settings, one could use the nice allocations computed by \allocshares (at most $\ell$, compared to the potentially exponential number) in a way such that the allocations have the same frequency as the probability distribution computed by the algorithm. In this sense, our randomized apportionment routine has an advantage over other proposed methods.

Informally, our method proceeds gradually from quotas which need to be rounded up with low probability, while keeping an eye on our two main constraints: not rounding up a quota too much, on the one hand, while not being left with not enough probability for allocations with quotas that need to be rounded up with high probability.

		\begin{algorithm}[ht!]
		 \caption{\allocshares $(s_{1},\ldots,s_{\ell})$}
		 \label{algo:DP2}
		\renewcommand{\algorithmicrequire}{\wordbox[l]{\textbf{Input}:}{\textbf{Output}:}}

		 \renewcommand{\algorithmicensure}{\wordbox[l]{\textbf{Output}:}{\textbf{Output}:}}
		\algsetup{linenodelimiter=\,}
		%\small
		%\footnotesize
		\begin{algorithmic}
			\REQUIRE A real-value allocation $(s_{1},\ldots,s_{\ell})$ over $\ell$ objects.
			\ENSURE A discrete allocation $(t_{1},\ldots,t_{\ell})$ over $\ell$ objects.
		\end{algorithmic}
		 \begin{algorithmic}[1] 
			 \STATE Sort and renumber $(s_{1},\ldots, s_{\ell})$ according to size of $s_{i}-\floor{s_{i}}$, with $s_{1}-\floor{s_{1}}$ being minimal.
			 \STATE Let $(p_{1},\ldots,p_{\ell})\longleftarrow (0,\ldots,0)$ where $p_i$ is the probability of rounding up cluster $i$.
			 \STATE Let $\bar{p}\longleftarrow 0$, the total probability allocated so far.
			 \STATE Let $D \longleftarrow \emptyset$, where D maps: allocation $\rightarrow$ probability.
			 \STATE $\alpha\longleftarrow\sum_{i=1}^{\ell}(s_{i}-\floor{s_{i}})$
			 \STATE $low\longleftarrow 1$; $high\longleftarrow \ell$

			 \WHILE{$low\leq high$}

			 \STATE Let $allocation\longleftarrow (\floor{s_{1}},\ldots,\floor{s_{low-1}},\ceil{s_{low}},\ldots,$
			 \STATE \hskip\algorithmicindent\relax $\ceil{s_{low+\alpha-1}},\floor{s_{low+\alpha}},\ldots,\floor{s_{high}}$,$\ceil{s_{high+1}},\ldots,\ceil{s_{\ell}})$
			 \STATE \hskip\algorithmicindent\relax Where if low=1, we start with $\ceil{s_{1}}$; if $high=\ell$, we
			 \STATE \hskip\algorithmicindent\relax end with $\floor{s_{high}}$; and if $\alpha=0$, we have only $\floor{s_{low}}$.
			 %\haris{Why not write it more cleanly as in the python code instead of making exceptions?}\omer{The Python code has 4 lines verifying what this is doing. Do you think it's better to write things like ``if $i<low$ round it down''? I thought writing the general form and mentioning the border-issues is clearer, but I might be mistaken.}
			 \STATE $prob\longleftarrow 0$
			 \STATE $prevLow\longleftarrow low$; $prevHigh\longleftarrow high$

			 \IF{$\alpha=0$}
			 \STATE $prob\longleftarrow 1-\bar{p}; high\longleftarrow high-1$\label{alphaZero}

			 \ELSE \IF{$s_{low}-\floor{s_{low}}-p_{low}<\ceil{s_{high}}-s_{high}-\bar{p}+p_{high}$}\label{probBounds}
			 \STATE $prob\longleftarrow s_{low}-\floor{s_{low}}-p_{low}$; $low\longleftarrow low+1$\label{lowGrows}
			 \ELSE
			 \STATE $prob\longleftarrow \ceil{s_{high}}-s_{high}-\bar{p}+p_{high}$
			 \STATE $high\longleftarrow high-1$\label{highShrinks}; $\alpha\longleftarrow \alpha-1$
			 \ENDIF
			 \ENDIF
			 \FORALL{$i$ such that $prevLow\leq i<prevLow+\alpha$ or $prevHigh<i\leq \ell$}
			 \STATE $p_{i}\longleftarrow p_{i}+prob$
			 \ENDFOR
			 \STATE $\bar{p}\longleftarrow \bar{p}+prob$
			 \STATE $D \longleftarrow D \cup (allocation \rightarrow prob)$
			 \ENDWHILE
			 \STATE Select an allocation $(t_{1},\ldots,t_{\ell})$ according to $D$.
\RETURN $(t_{1},\ldots,t_{\ell})$

		 \end{algorithmic}
		\end{algorithm}

\begin{example}\label{allocationExample}
We give an example of the working of \allocshares, shown in Algorithm \ref{algo:DP2}, when a set of quotas are not integers.
Suppose we have the following Dollar shares for $\ell=5$:
$$\vec{s} = (1.1,2.1,1.3,1.7,1.8)$$
We wish to select $k=8$ agents. The $\alpha$, number of clusters that need to be rounded up, computed on line 5, is 2, and we start with $low=1;high=5$.

\noindent
We begin by considering the allocation:
$$\vec{t}_1 =(\ceil{1.1},\ceil{2.1},\floor{1.3},\floor{1.7},\floor{1.8}) = (2,3,1,1,1)$$
Since $0.1=s_{1}-\floor{s_{1}}<\ceil{s_{5}}-s_{5}=0.2$, this allocation will get a probability of $0.1$. Now $low=2$ (since cluster $1$ should not be rounded up any more), $\bar{p}=0.1$, and we ``slide'' our allocation by one to the right, and look at the next allocation:
$$\vec{t}_2 =(\floor{1.1},\ceil{2.1},\ceil{1.3},\floor{1.7},\floor{1.8}) = (1,3,2,1,1)$$
Since the second cluster has been rounded up with a probability of $0.1$ in the previous allocation, $s_{2}-\floor{s_{2}}-p(v_{2})=0$. Therefore this allocation is given probability $0$, $\bar{p}$ does not change and now $low=3$. We now move to allocation:
$$\vec{t}_3 =(\floor{1.1},\floor{2.1},\ceil{1.3},\ceil{1.7},\floor{1.8}) = (1,2,2,2,1)$$
We see $0.3=s_{3}-\floor{s_{3}}-p_{v_{3}}>\ceil{s_{5}}-s_{5}-\bar{p}+p_{v_{5}}=0.1$, so this allocation is given probability $0.1$, $\bar{p}=0.2$, and from now on cluster $5$ will always be rounded up in every allocation we consider. Hence, $\alpha$ and $high$ now change: $\alpha=1$ and $high=4$. We now turn to look at:
$$\vec{t}_4 =(\floor{1.1},\floor{2.1},\ceil{1.3},\floor{1.7},\ceil{1.8}) = (1,2,2,1,2)$$
Since $0.2=s_{3}-\floor{s_{3}}-p_{v_{3}}=\ceil{s_{4}}-s_{4}-\bar{p}+p_{v_{4}}=0.2$, we give this allocation the probability $0.2$, $\bar{p}=0.4$, and $low=4$. Finally, we look at:
$$\vec{t}_5 =(\floor{1.1},\floor{2.1},\floor{1.3},\ceil{1.7},\ceil{1.8}) = (1,2,1,2,2)$$
We give this allocation the probability $s_{4}-\floor{s_{4}}-p_{v_{4}}=0.6$.

Overall, the algorithm yields a probability distribution over $\ell$ allocation vectors.
 \begin{equation*}
 \begin{split}
 t_1 = (2,3,1,1,1)&: 0.1\\
 t_2 = (1,3,2,1,1)&: 0.1\\
 t_3 = (1,2,2,2,1)&: 0.1\\
 t_4 = (1,2,2,1,2)&: 0.2\\
 t_5 = (1,2,1,2,2)&: 0.6\\
 \end{split}
 \end{equation*}
\end{example}

This defines our probability space, and the expected number of agents selected from each cluster is exactly its Dollar share: $(1.1,2.1,1.3,1.7,1.8)$.
\begin{theorem}\label{allocationWorks}
\allocshares defines a distribution and the expected allocation of each cluster $i$ is its share $s_i$.
\end{theorem}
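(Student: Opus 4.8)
The plan is to reduce the statement to two facts: (1) the values $prob$ produced across the while loop are nonnegative and sum to $1$, so that $D$ is a genuine probability distribution over nice allocations; and (2) for each cluster $i$ the total probability $P_i$ with which it is rounded up (assigned $\ceil{s_i}$ rather than $\floor{s_i}$) equals its fractional part $f_i := s_i - \floor{s_i}$. Fact (2) immediately yields the expectation claim: each drawn allocation gives cluster $i$ either $\floor{s_i}$ or $\ceil{s_i}$, so its expected allocation is $\floor{s_i} + P_i(\ceil{s_i}-\floor{s_i})$, which equals $s_i$ exactly when $P_i = f_i$ (and is automatically $s_i$ when $s_i \in \natz$, since then $f_i = 0$ and $\floor{s_i}=\ceil{s_i}$).

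I would prove both facts by a single induction on the loop iterations, tracking the \emph{deficit} $d_i := f_i - p_i$ of each cluster, where $p_i$ is the probability mass with which $i$ has so far been rounded up. At the start of every iteration I would maintain: (I1) clusters already passed by $low$ are frozen to $\floor{\cdot}$ and have $d_i = 0$; (I2) clusters already passed by $high$ are frozen to $\ceil{\cdot}$ and satisfy $d_i = 1-\bar{p}$; (I3) the active deficits sum to $\alpha(1-\bar{p})$; (I4) every active deficit lies in $[0,\,1-\bar{p}]$; and, crucially, (I5) the active deficits are nondecreasing in the sorted index, $d_{low}\le\cdots\le d_{high}$. Initialization is immediate, since $\sum_i f_i = \alpha$, $\bar{p}=0$, and the clusters are listed by increasing fractional part.

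The heart of the argument is the inductive step, and here the sort in the first line does the real work. The two quantities being compared in the loop are exactly the low-side candidate $d_{low}$ and the high-side candidate $(1-\bar{p})-d_{high}$, and the algorithm sets $prob$ to the smaller one, freezing cluster $low$ down (when $d_{low}-p_{low}$ wins) or cluster $high$ up (otherwise). I would verify that each branch preserves (I1)--(I5). Preserving (I5) is routine: the surviving deficits split into two blocks that are each either shifted down by the constant $prob$ or left unchanged, and monotonicity across the junction follows from $prob\ge 0$. The delicate point is (I4). In the freeze-down branch one must show the new largest active deficit still fits under the reduced budget $(1-\bar{p})-d_{low}$; this is supplied \emph{precisely} by the branch test $d_{low}+d_{high}<1-\bar{p}$. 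Symmetrically, in the freeze-up branch the opposite test $d_{low}\ge(1-\bar{p})-d_{high}$ is exactly what keeps the new smallest deficit nonnegative. Thus the comparison is not arbitrary: it is calibrated so that whichever cluster is frozen, the continuation remains feasible, and nonnegativity of every $prob$ drops out of (I4) and (I5) along the way.

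Finally I would close off termination and normalization. Each iteration advances $low$ or retracts $high$, so $high-low$ strictly decreases and the loop stops within $\ell$ steps, producing at most $\ell$ nice allocations. Since $\alpha$ drops by one in each freeze-up step and never rises, after all $\alpha$ round-ups are committed we reach $\alpha=0$; at that instant (I3) forces every remaining active deficit to $0$ (being nonnegative and summing to $0$), and the $\alpha=0$ case assigns all leftover mass $1-\bar{p}$ to the final allocation, raising $\bar{p}$ to $1$ and, through (I2), driving every frozen-up deficit to $0$ as well. Hence at termination $d_i=0$, i.e.\ $P_i=f_i$, for every $i$, and the $prob$ values sum to $\bar{p}=1$. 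I expect the main obstacle to be exactly the (I4)/(I5) maintenance: making the invariants strong enough that the \emph{single} comparison in the loop certifies feasibility of everything that follows, and recognizing that sorting by fractional part is what makes the monotone-deficit invariant (I5) — and therefore the whole scheme — go through.
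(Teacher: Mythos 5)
Your proof is correct, and at bottom it runs on the same fuel as the paper's: the sort by fractional part, the running per-cluster round-up probabilities $p_i$, and the observation that the branch comparison in Line~\ref{probBounds} is exactly the condition needed to keep both sides feasible. The decomposition, however, is genuinely different. The paper proves two auxiliary lemmas: \lemref{neverTooMuch}, which is essentially your (I1), (I2) and (I4) (with your (I5) appearing only implicitly, in the remark that any cluster rounded up alongside $low$ satisfies $p_i\le p_{low}$ and hence has deficit at least $d_{low}$); and \lemref{alwaysAlpha}, a standalone combinatorial fact about convex combinations of $0/1$-increment vectors with exactly $\alpha$ ones, proved by a separate induction on $\ell$. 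The paper then uses \lemref{alwaysAlpha} twice --- once inside a proof by contradiction to establish $low+\alpha\le high$ throughout, and once at the very end to pin down the round-up probability of the last active cluster. Your conservation invariant (I3), namely that the active deficits sum to $\alpha(1-\bar{p})$, is a direct loop-bookkeeping replacement for \lemref{alwaysAlpha}: together with (I4) it forces every remaining deficit to zero the moment $\alpha$ reaches $0$, whereas the paper must argue separately that termination can only occur through the $\alpha=0$ branch. What your route buys is a single self-contained induction with no auxiliary simplex lemma and no contradiction argument; what the paper's route buys is an isolated, algorithm-independent lemma. One cosmetic slip: in describing the branch test you write the low-side candidate as ``$d_{low}-p_{low}$''; it should be $d_{low}$ itself (equivalently $f_{low}-p_{low}$), as your own preceding sentence has it.
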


To prove this theorem, we first need several lemmas. Note that any cluster $i$ needs to be rounded up with probability of $s_{i}-\floor{s_{i}}$, and rounded down with probability $\ceil{s_{i}}-s_{i}$.

\begin{lemma}\label{alwaysAlpha}
Let $z=(z_{1},\ldots,z_{\ell})\in \mathbb{N}^{\ell}$ %($x_{i}\in\mathbb{N}$), 
and let set $A=\{z'\in\mathbb{N}^{\ell}~ | ~\text{in}~ \alpha\in\mathbb{N} ~\text{coordinates $z'_{i}=z_{i}+1$. In the rest $z'_{i}=z_{i}$}\}$. For any $\hat{z}=(\hat{z}_{1},\ldots,\hat{z}_{\ell})$ that is a simplex of $A$ (i.e., $\hat{z}=\sum_{a\in A}p_{a}a$ such that $\sum_{a\in A}p_{a}=1$), $\sum_{i=1}^{\ell}(\hat{z}_{i}-z_{i})=\alpha$.
\end{lemma}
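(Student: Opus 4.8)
The plan is to exploit the fact that the ``total increment'' functional $T(z') := \sum_{i=1}^{\ell}(z'_{i}-z_{i})$ is affine in $z'$, so it is preserved under the affine combination that defines $\hat z$. First I would verify the claim on a single element $a\in A$: by definition, such an $a$ agrees with $z$ in every coordinate except exactly $\alpha$ of them, where it exceeds $z$ by one. Hence $a_{i}-z_{i}=1$ in those $\alpha$ coordinates and $a_{i}-z_{i}=0$ in the remaining $\ell-\alpha$ coordinates, so $T(a)=\sum_{i=1}^{\ell}(a_{i}-z_{i})=\alpha$ for \emph{every} $a\in A$. In other words, $T$ takes the constant value $\alpha$ on all of $A$.

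Next I would push this constant value through the representation $\hat z=\sum_{a\in A}p_{a}a$ with $\sum_{a\in A}p_{a}=1$. For each fixed coordinate $i$,
\[
\hat z_{i}-z_{i}=\sum_{a\in A}p_{a}a_{i}-z_{i}=\sum_{a\in A}p_{a}(a_{i}-z_{i}),
\]
where the final step rewrites $z_{i}=\left(\sum_{a\in A}p_{a}\right)z_{i}=\sum_{a\in A}p_{a}z_{i}$ using the normalization $\sum_{a\in A}p_{a}=1$. Summing over $i$ and interchanging the two finite summations then yields
\[
\sum_{i=1}^{\ell}(\hat z_{i}-z_{i})=\sum_{a\in A}p_{a}\sum_{i=1}^{\ell}(a_{i}-z_{i})=\sum_{a\in A}p_{a}\cdot\alpha=\alpha,
\]
which is exactly the conclusion of the lemma.

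As for difficulty, there is essentially no obstacle here: the statement is a direct consequence of $T$ being invariant across $A$ together with the fact that the coefficients sum to one. Notably, the argument never uses nonnegativity of the $p_{a}$, only the normalization $\sum_{a\in A}p_{a}=1$, so the hypothesis that $\hat z$ lies in the \emph{simplex} of $A$ is stronger than what is needed. The single point requiring a little care is the interchange of the finite sums over $i$ and over $a$ in the last display, which is justified simply because both index sets are finite.
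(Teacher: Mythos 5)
Your proof is correct, and it takes a genuinely different and considerably more economical route than the paper. The paper proves the lemma by induction on $\ell$: the base case is $\ell=\alpha+1$ (where each element of $A$ is determined by the single coordinate it does \emph{not} increment), and the inductive step partitions $A$ into the set $B$ of elements that increment coordinate $1$ and its complement, renormalizes the weights on $A\setminus B$ to form an auxiliary simplex point in dimension $\ell-1$ (actually $h$), applies the induction hypothesis there, and then adds back the contribution of $B$. You instead observe that the functional $T(z')=\sum_{i=1}^{\ell}(z'_{i}-z_{i})$ is constant and equal to $\alpha$ on all of $A$, and that $T$ commutes with affine combinations because $\sum_{a\in A}p_{a}=1$; a single interchange of two finite sums then finishes the argument. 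This is the same conclusion obtained with strictly less machinery, and your observation that only the normalization of the weights is used --- nonnegativity plays no role, so the result holds for the affine hull of $A$, not just its convex hull --- is a genuine (if minor) strengthening that the paper's inductive argument obscures. There is no gap in your reasoning.
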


\begin{proof}

\begin{equation*}
\sum_{i=1}^{\ell}(\hat{z}_{i}-z_{i})=\sum_{i=1}^{\ell}\big[(\sum_{a\in A}(p_{a}a_{i}))-(\sum_{a\in A}p_{a})z_{i}\big]=\sum_{a\in A}p_{a}\big[\sum_{i=1}^{\ell}(a_{i}-z_{i})\big]
\end{equation*}

For any $a\in A$, from $A$'s definition we know $\sum_{i=1}^{\ell}(a_{i}-z_{i})=\alpha$. Hence:
$$
\sum_{a\in A}p_{a}\big[\sum_{i=1}^{\ell}(a_{i}-z_{i})\big]=\sum_{a\in A}p_{a}\alpha=\alpha\sum_{a\in A}p_{a}=\alpha
$$

\end{proof}

Note that Lemma~\ref{alwaysAlpha} is applicable to our algorithm, as we can consider $z=(\floor{s_{1}},\ldots,\floor{s_{\ell}})$ as a basis, and in each allocation the algorithm rounds up $\alpha$ coordinates, and this rounding up is equivalent to taking $\alpha$ coordinates, and instead of using the value from $z$ (the rounded down value, $\floor{s_{i}}$), we round up and add 1 to the coordinate.

%\haris{Explain more. In the lemma, we are adding 1's to a few entries. In the algorithm we are rounding up a few entries. }\omer{Is this better?}

\begin{lemma}\label{neverTooMuch}
At no point in the algorithm is a cluster rounded up or down in allocations that, together, have more probability than it should, i.e., for any cluster $i$, it is always true that $p_{i}\leq s_{i}-\floor{s_{i}}$ and $\bar{p}-p_{i}\leq \ceil{s_{i}}- s_{i}$.
Moreover, for any $i<low$, the probability that cluster $i$ is rounded up is $s_{i}-\floor{s_{i}}$. For any $i>high$, the probability that cluster $i$ is rounded down is $\ceil{s_{i}}-s_{i}$.
\end{lemma}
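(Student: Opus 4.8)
The plan is to prove all four assertions simultaneously as a loop invariant, by induction on the iterations of the \textbf{while} loop. It is cleanest to reparametrize: for each cluster $i$ track its \emph{remaining up-probability} $u_i = (s_i-\floor{s_i})-p_i$ and \emph{remaining down-probability} $d_i = (\ceil{s_i}-s_i)-(\bar{p}-p_i)$. Clusters whose share is already an integer have zero fractional part, are sorted to the front by line~1, and are locked down at a zero-probability low step with $p_i=0$, so I may assume every share is non-integral; then $\ceil{s_i}-\floor{s_i}=1$ gives the identity $u_i+d_i = 1-\bar{p}$ for all $i$. In this language the lemma's four claims are exactly $u_i\ge 0$, $d_i\ge 0$, $u_i=0$ for $i<low$, and $d_i=0$ for $i>high$.

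Before the induction I would record two bookkeeping facts. First, an iteration raises $p_i$ by its $prob$ for precisely the clusters rounded up in that allocation, namely $\{low,\dots,low+\alpha-1\}\cup\{high+1,\dots,\ell\}$; since every allocation rounds up exactly $\alpha_0:=\sum_j(s_j-\floor{s_j})$ clusters (as in Lemma~\ref{alwaysAlpha}), a trivial induction gives the \emph{budget identity} $\sum_i p_i=\alpha_0\,\bar{p}$ throughout. Second, and crucially, I would strengthen the induction hypothesis with the \emph{monotonicity invariant} $u_1\le u_2\le\cdots\le u_\ell$: the remaining up-probabilities stay sorted in the same order as the fractional parts. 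This monotonicity is the device that couples the sorting from line~1 to the two branch tests on line~\ref{probBounds}, and it is the engine of the whole argument. The base case is immediate, since $\bar{p}=0$, all $p_i=0$, and both locked regions are empty.

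For the inductive step I would split on the branch taken. In the \emph{low-grows} branch (line~\ref{lowGrows}) the iteration adds $prob=u_{low}$ to the rounded-up clusters: monotonicity gives $u_i\ge u_{low}$ on the window, so these stay $\ge 0$; the failed-to-fire competitor, i.e.\ the test $u_{low}<d_{high}$ on line~\ref{probBounds} combined with $u_i\le u_{high}$, keeps $d_i\ge 0$ on the untouched middle clusters; cluster $low$ reaches $u_{low}=0$ (giving the $i<low$ equality for the advanced pointer); and the locked-up clusters keep $d_i=0$ because their $p_i$ and $\bar{p}$ rise in lockstep. The \emph{high-shrinks} branch (line~\ref{highShrinks}) is the mirror image with $prob=d_{high}$: the branch condition now supplies $u_{low}\ge d_{high}$, so the rounded-up window clusters retain $u_i\ge 0$; the clusters up to $high-1$ keep $d_i\ge 0$ via $u_i\le u_{high}$; and the newly locked cluster $high$ attains $d_{high}=0$ by the very definition of $prob$. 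In both branches one then checks that subtracting a common constant from a sorted prefix (resp.\ suffix) preserves $u_1\le\cdots\le u_\ell$ across every boundary, which is a short order check exploiting $prob\ge 0$.

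The genuinely delicate case is $\alpha=0$ (line~\ref{alphaZero}), where the entire remaining mass $1-\bar{p}$ is dumped on one allocation that rounds \emph{every} middle cluster down, with no branch test to lean on. Here I would invoke the budget identity: $\alpha=0$ forces $\ell-high=\alpha_0$, and substituting this into $\sum_i p_i=\alpha_0\bar{p}$ while using the already-established equalities $\sum_{i<low}p_i=\sum_{i<low}(s_i-\floor{s_i})$ and the locked-up values collapses the bookkeeping to $\sum_{low\le i\le high}u_i=0$; since each $u_i\ge 0$, every middle cluster already has $u_i=0$, so rounding them all down is legitimate and each is absorbed into the locked-up region with $d_i=0$. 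I expect the two main obstacles to be discovering and preserving the monotonicity invariant across the two sliding updates, and carrying out this counting step for $\alpha=0$; everything else is routine algebra propelled by the identity $u_i+d_i=1-\bar{p}$.
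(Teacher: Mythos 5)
Your proof is correct and rests on the same core mechanism as our own: the sorting of the fractional parts $s_i-\floor{s_i}$ in line~1, combined with the branch test on line~\ref{probBounds}, guarantees that the extremal clusters $low$ and $high$ are always the binding constraints, so no cluster in the sliding window can be rounded up or down too much. Where we track this by a history argument (every window cluster $i>low$ has only ever been rounded up together with cluster $low$, hence $p_i\leq p_{low}$, hence $s_i-\floor{s_i}-p_i\geq s_{low}-\floor{s_{low}}-p_{low}$), you package the identical fact as an explicitly maintained loop invariant $u_1\leq\cdots\leq u_\ell$ on the residuals $u_i=(s_i-\floor{s_i})-p_i$; the two are equivalent, but your induction is more mechanical to verify, and your identity $u_i+d_i=1-\bar{p}$ makes the step ``the total allocated probability never exceeds $1$'' immediate, where we argue it somewhat informally at the end of our proof. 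The one genuinely additional ingredient is your handling of the $\alpha=0$ branch (line~\ref{alphaZero}) inside the lemma itself, via the budget identity $\sum_i p_i=\alpha_0\bar{p}$, which forces $u_i=0$ on every middle cluster before the remaining mass is assigned; we instead defer that accounting to the proof of \thmref{allocationWorks}, where it is carried out through \lemref{alwaysAlpha}, so your version makes the lemma more self-contained. One small caution: preserving your monotonicity invariant at the boundary between clusters $high$ and $high+1$ requires $prob\leq d_{high}$, not merely $prob\geq 0$ as you state in that sentence, but this bound is exactly the branch condition you already invoke in the same inductive step, so nothing essential is missing.
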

\begin{proof}
Recall that the probability cluster $i$ is rounded up needs to be $s_{i}-\floor{s_{i}}$. When a set of clusters is being rounded up, the probability of the allocation is bounded by $s_{low}-\floor{s_{low}}-p_{low}$ (Line~\ref{probBounds}), i.e., the probability $s_{low}$ should be rounded up which still remains to be allocated. Any other cluster $i$ rounded up in the same allocation has been rounded with $s_{low}$ in every previous allocation when it has been rounded up (since $i>low$), so $p_{i}\leq p_{low}$. Since the clusters are ordered according to $s_{i}-\floor{s_{i}}$ in line 1, we know that $s_{i}-\floor{s_{i}}>s_{low}-\floor{s_{low}}$. Hence, $s_{low}-\floor{s_{low}}-p_{low}\leq s_{i}-\floor{s_{i}}-p_{i}$, so no cluster is rounded up more than it should be.

At any point in the algorithm, if $i<low$, then there was a stage where $low=i$, and line~\ref{lowGrows} changed $low$ to $i+1$. However, at that point, cluster $i$ is rounded up exactly the additional probability it needed to be rounded up ($s_{i}-\floor{s_{i}}-p_{i}$) , and no further allocation in the algorithm will round it up.

Similarly, if $i>high$ (and $\alpha\neq 0$), there was a stage where $high=i$, and line~\ref{highShrinks} changed $high$ to $i-1$. However, at that point, cluster $i$ is rounded down exactly the additional probability it needs to be rounded down. It could not have been rounded down too much previously, as every allocation's probability is bounded so that cluster $high$ will not be rounded down more than it is supposed to ($\ceil{s_{high}}-s_{high}$). %Since for any cluster with index $j>i$, $\ceil{s_{j}}-s_{j}\leq \ceil{s_{i}}-s_{i}$, it cannot be that when $high=j$ it rounded down cluster $i$ too much -- that would have meant cluster $j$ was rounded down too much as well. 
Once again, once an index is $high+1$, no further allocation will round it down.

For cluster $i$, $low\leq i\leq high$, it has only been rounded down when $s_{high}$ was rounded down as well (though not vice versa) and rounded up when $s_{low}$ was rounded up (again, not vice versa), and as the $\ceil{s_{i}}-s_{i}\geq\ceil{s_{high}}-s_{high}$ and $s_{i}-\floor{s_{i}}\geq s_{low}-\floor{s_{low}}$, these clusters have not been rounded up more than $s_{i}-\floor{s_{i}}$, or rounded down more than $\ceil{s_{i}}-s_{i}$. Therefore, the sum of allocation probabilities is never larger than 1 (since $s_{i}-\floor{s_{i}}+\ceil{s_{i}}-s_{i}=1$). Hence, also, for clusters $i<low$, as they have received the exact needed probability of being rounded up, and since the sum of allocations does not exceed $1$, they have not been rounded down more than needed.
\end{proof}

\begin{proof}[Proof of Theorem~\ref{allocationWorks}]
We first show all the allocations we consider only round up $\sum_{i=1}^{\ell}(s_{i}-\floor{s_{i}})$ clusters, as otherwise, allocations are not allocating exactly $k$ agents. As long as $low+\alpha\leq high$ in the algorithm this is trivially true. We now wish to show the situation $low+\alpha>high$ cannot happen (as that results in too few clusters rounded up).

If $low+\alpha>high$ then there was a stage in which $low+\alpha=high$, and then we executed line~\ref{lowGrows}. This means that cluster $high$ needs more unallocated probability to be rounded down than cluster $low$ needs unallocated probability to be rounded up. Moreover, thanks to the monotonicity of elements of the $s_{i}$ vector, we know cluster $high$ still has need for more allocations with positive probability in which it is rounded up. But this property means that if we advanced $low$ and assigned all remaining unassigned probability to the allocation rounding up clusters $low+1,\ldots,\ell$, we would be rounding them up too much, and for clusters $low+1,\ldots,high$, strictly so. But we know from Lemma~\ref{neverTooMuch} that for all $i\leq low$, we have rounded the share $s_i$ up exactly correctly, so looking at the vector $\hat{z}\in\mathbb{N}^{\ell}$ in which each coordinate is the expected allocation for that cluster, we have:
\begin{equation*}
\begin{split}
\sum_{i=1}^{\ell}(\hat{z}_{i}-\floor{s_{i}})=&\sum_{i=1}^{low}(s_{i}-\floor{s_{i}})+\sum_{i=low+1}^{\ell}\hat{z}_{i}-\floor{s_{i}}>\\
>&\sum_{i=1}^{\ell}(s_{i}-\floor{s_{i}})=\alpha
\end{split}
\end{equation*}
This contradicts Lemma~\ref{alwaysAlpha}, as all allocations had exactly $\alpha$ clusters rounded up. So it cannot be that cluster $high$ still needed more probability to be rounded down, and therefore, if $low+\alpha=high$, line~\ref{lowGrows} would not have been executed at this point.

Since $low+\alpha\leq high$ at all times, the algorithm will end when $low=high$. Hence, the previous step ended with $\alpha$ becoming $0$ in line~\ref{highShrinks}. Observe that $\alpha=0$ only in this case: otherwise, it means the sum of expected value---that is, probability to be rounded up---over all clusters is above $\alpha$: we have too many clusters that need to be rounded up. 

We now wish to prove that the last step, where the clusters $high+1,\ldots,\ell$ are rounded up, results in what we desired. Since clusters $1,\ldots,low-1$ have been allocated the right probability to be rounded up, as well as clusters $high+1,\ldots,\ell$ (Lemma~\ref{neverTooMuch}), we only need to verify this for cluster $low$. But according to Lemma~\ref{alwaysAlpha}, the probability of $low$ being rounded up is exactly $\alpha-\sum_{1\leq i\leq \ell, i\neq low}(s_{i}-\floor{s_{i}})$, which is exactly $s_{low}-\floor{s_{low}}$, which means cluster $low$ has the correct allocation.
\end{proof}

\section{Analytical Comparisons with Other Mechanisms}\label{sec:comparision}

Though \edp draws inspiration from Dividing a Dollar and Partition, there are key differences between these mechanisms and clear reasons to use \edp over other potential variants.

%\paragraph{Comparison with other Dollar based mechanisms}
%
%Although Dollar Partition is partly based on the Dollar mechanism for dividing a bonus, it is more desirable than other natural impartial peer selection mechanisms based on the Dollar framework. Dollar Raffle relies too much on randomization and gives even the worst agents non-zero probability of being selected. 
%\haris{Dollar Raffle is strategyproof for $k=1$ but it is not clear that Dollar Raffle is strategyproof for $k>1$.} 
%
%We can also define another mechanism \textbf{Dollar Partition Raffle} to select exactly $k$ agents. This extension takes the dollar shares of the \emph{clusters} in Dollar Raffle and uses these shares to define a probability distribution over the partitions. When then select the top agents (that have not already been selected) from partitions until we have $k$ agents. %This extension remains strategyproof and selects exactly $k$ agents. 
%\haris{It is not clear that Dollar Partition Raffle is strategyproof.}
%
%Although Dollar Partition Raffle relies less on the ``luck of the draw'', it still admits non-zero probability of selecting the worst $k$ agents if the same cluster (consisting of the worst agents) is selected repeatedly. Dollar Partition Raffle is equivalent to Dollar Raffle if $\ell=n$.
%
%Another possible approach is that $k$ agents with the maximum dollars shares are selected. However such a mechanism is \emph{not} an strategyproof peer selection mechanism because an agent who is not selected may get selected by giving lower score to the agents who are selected.

\subsection{Comparison with other Dollar Based Mechanisms}
Although \edp is partly based on the Dollar mechanism for dividing a bonus (division of a divisible item between agents), it is more desirable than some other natural mechanisms one can construct based on the Dollar framework. Consider the following possible adaptations of the Dollar framework and their shortcomings. 
\begin{description}[itemsep=0.1cm,leftmargin=0.25cm]
\item[Dollar Raffle:] Take the dollar mechanism (without any partitions), compute the relative fraction of the dollar each agent should receive. Use these fractions as a probability distribution over the agents and then repeatedly select an agent according to its dollar share until $k$ different agents are selected. 

\item[Dollar Partition Raffle:] Take the Dollar shares of the \emph{clusters} in Dollar Raffle and use these shares to define a probability distribution over the clusters. A cluster is drawn with respect to the cluster Dollar probabilities and the next best agent, based on reviews of agents outside the cluster, is selected, until $k$ different agents are selected. 

\item[Top Dollar:] Select the agents with maximum Dollar shares.\footnote{Vanilla is equivalent to Top Dollar when agents' valuations are normalized.}

\end{description}

Both Dollar Raffle and Dollar Partition Raffle have a non-zero probability of selecting the $k$ worst agents. While Top Dollar is not strategyproof for any $k<n$,
%\haris{Not true. It is SP for $k=n$.} \nick{now you're being a pedantic math person -- if k=n then all mechanisms are strategyproof and efficient! Add the sentence if you think it's important.}\omer{Added a ``$< n$'', including below. It is silly.}
Dollar Raffle and Dollar Partition Raffle are strategyproof for $k=1$. None, however, are strategyproof for $n>k>1$. 

\begin{theorem}\label{raffleNoStrategyproof}
	Dollar Raffle, Dollar Partition Raffle, and Top Dollar are not strategyproof for $n>k>1$.
\end{theorem}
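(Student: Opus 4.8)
The plan is to prove each of the three claims by exhibiting an explicit instance with $k=2$ (the smallest case of $k>1$) in which a single agent strictly improves its probability of selection by misreporting. The common thread is that an agent's own Dollar share $x_i = \sum_{j\neq i} v_j(i)$ depends only on the \emph{others'} reports, so every manipulation will proceed by having the manipulator redistribute its fixed review budget of $1/n$ over the other agents (or the other clusters). This invariance is exactly what makes the raffle mechanisms strategyproof at $k=1$, so the constructions must crucially exploit the fact that a second slot is being filled.

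For \emph{Top Dollar} I would use the threshold effect directly. I set up shares so that the manipulator $i$ sits just below the agent $j$ occupying the $k$-th (here second) position, with some other agent $h$ safely first. Since $i$ cannot raise its own share, it instead lowers $v_i(j)$ and transfers the freed weight onto $h$; this strictly decreases $x_j$ while keeping $h$ on top, and with the numbers chosen so that the new $x_j$ falls below $x_i$, agent $i$ displaces $j$ and enters the selected set. This is essentially the non-strategyproofness of Vanilla, to which Top Dollar reduces after normalization, instantiated for $k>1$.

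For \emph{Dollar Raffle} I would compute the selection probability in closed form. For two draws without replacement from the distribution $(x_1,\dots,x_n)$ one has $\Pr[i\text{ selected}] = x_i\bigl(1 + \sum_{j\neq i} \tfrac{x_j}{1-x_j}\bigr)$. Here $x_i$ and the total $\sum_{j\neq i} x_j = 1-x_i$ are both fixed under $i$'s report, while the distribution of the remaining mass is not: $i$ can shift its review budget among the agents it reviews. Since $w \mapsto w/(1-w)$ is strictly convex, concentrating that budget on a single already high-weight competitor strictly increases $\sum_{j\neq i} x_j/(1-x_j)$, hence strictly increases $\Pr[i\text{ selected}]$; a small instance where truthful reporting spreads the budget then witnesses a profitable deviation. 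The \emph{Dollar Partition Raffle} argument is the cluster-level analogue: writing $P(i) = x_{C}\bigl(1 + \sum_{D\neq C} \tfrac{x_D}{1-x_{\mathrm{top}(D)}}\bigr)$ for the rank-one member of cluster $C$ under two draws, the manipulator (who reviews only agents outside $C$) cannot change $x_C$, but by pushing weight onto the top-ranked member of some other cluster it raises both that cluster's share and the removed-share denominator, strictly increasing the sum.

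The main obstacle is the Dollar Partition Raffle case. Unlike Top Dollar (a clean threshold argument) and Dollar Raffle (a single convex objective the manipulator fully controls through its own contributions), here the manipulator influences the relevant probability only \emph{indirectly}, through the shares and internal rankings of clusters it does not belong to, and it controls only the slice of those quantities coming from its own reviews. I therefore expect the bulk of the work to be constructing a small instance (two clusters besides $C$ suffice) in which one reallocation of the manipulator's review weight provably and strictly increases the renormalized without-replacement probability, while checking that the stopping rule and the assumption $k\le \min_i |C_i|$ never exhaust a cluster, so that the two-draw formula applies cleanly.
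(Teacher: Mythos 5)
Your overall blueprint coincides with the paper's: for each mechanism an explicit $k=2$ instance, exploiting the fact that the manipulator's own Dollar share is untouched by its own report, so that the only lever is how its fixed review budget is spread over the others. Your Top Dollar argument is the paper's in mirror image (the paper has agent $a_{k+1}$ withhold its point from the marginal agent $a_k$; you have $i$ shift weight from the marginal $j$ onto a safe leader $h$), and is fine. For Dollar Raffle you actually go further than the paper: the paper exhibits one numeric profile ($b_1,b_2,b_3$ splitting points among themselves) and computes the two selection probabilities by hand, whereas you derive the closed form $\Pr[i]=x_i\bigl(1+\sum_{j\neq i}x_j/(1-x_j)\bigr)$ and observe that, with $x_i$ and $\sum_{j\neq i}x_j$ fixed, strict convexity of $w\mapsto w/(1-w)$ makes concentrating the budget on the strongest competitor strictly better than spreading it. This is a cleaner and more general formalization of the paper's ``contribute to a very strong agent'' intuition, and it yields a profitable deviation from essentially any truthful profile that spreads the budget, not just one hand-picked instance.

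Where you are not yet done is Dollar Partition Raffle, and the difficulty is not only the instance construction you anticipate but a modelling choice you have made silently. You renormalize by $1-x_{\mathrm{top}(D)}$, i.e.\ you remove the selected \emph{agent's} individual share before the second draw; the paper's own computation (the factor $\frac{1/3}{2/3-1/(2n)}$) removes the selected \emph{cluster's} entire share. Worse, under the most literal reading of the mechanism---clusters drawn i.i.d.\ with fixed probabilities $x_D$, each draw yielding that cluster's next-best unselected agent---no renormalization occurs until a cluster is exhausted, the number of times $C$ is drawn is binomial with parameter $x_C$ untouched by the manipulator, and the mechanism would be strategyproof. So the claim for this mechanism hinges entirely on which renormalization convention is adopted, and you must fix one (preferably the paper's) before your formula means anything. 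You should also be explicit that pushing weight onto cluster $D$ necessarily pulls the same weight off some other cluster $D'$, so the claimed increase of $\sum_{D\neq C}x_D/(1-x_{\mathrm{top}(D)})$ is a net effect that has to be verified on a concrete instance, exactly as the paper does with its explicit $b_1,b_2,b_3$ calculation.
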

\begin{proof}
	
	%\haris{Reviewer comment: The first paragraph of the proof of Theorem 7 is unclear, maybe because of the previous point.}

%\omer{Addressed the ``previous point'' (explaining dollar raffle) and slightly edited 1st paragraph}

For Dollar Raffle and Dollar Partition Raffle, the proof follows a similar path: The mechanism iterates until it chooses $k$ different agents, which is equivalent to eliminating each selected agent and re-normalizing the dollar partitions, i.e., the probabilities of being selected, since once some agent is selected we ignore its repeated selection. This re-normalization prevents the mechanism from being strategyproof, as now the probabilities of others matter for each agent.

For example, an agent will prefer to contribute to a very strong agent. This strong agent, once eliminated, will make our agent's probability increase significantly. Suppose $k=2$ using Dollar Raffle (Dollar Partition Raffle), and suppose all agents (clusters) except $b_{1},b_{2},b_{3}$ allocate their points equally between those 3. $b_{1}$ divides its point equally between $b_{2}$ and $b_{3}$, as does $b_{2}$ between $b_{1}$ and $b_{3}$. Suppose $b_{3}$ believes it should also divide its point equally between $b_{1}$ and $b_{3}$. In that case, it has a probability $\frac{1}{3}$ of being selected first, and a probability of $\frac{1}{3}$ of being selected second, ultimately, $\frac{2}{3}$. But if agent (in) $b_{3}$ decides to give its point fully to (cluster) $b_{1}$, the probability of $b_{3}$ being selected first does not change. But the probability of $b_{1}$ being selected and then $b_{3}$ is $\frac{1}{3}$ (in Dollar Partition Raffle: $(\frac{1}{3}+\frac{1}{2n})\frac{\frac{1}{3}}{\frac{2}{3}-\frac{1}{2n}}$), and the probability of $b_{2}$ being selected and then $b_{1}$ is $\frac{1}{15}$ (in Dollar Partition Raffle: $(\frac{1}{3}-\frac{1}{2n})\frac{\frac{1}{3}}{\frac{2}{3}+\frac{1}{2n}}$). The sum of these is more than $\frac{1}{3}$, hence doing so would improve agent (in cluster) $b_{3}$ chances of being selected. This proof can easily be extended to any additional $k$.

For Top Dollar, agents are $a_{1},\ldots, a_{k+1}$. Agents $a_{1},\ldots a_{k-1}$ allocate each of their points by giving $\frac{1}{k}-\frac{1}{k^{2}}$ to agent $a_{k+1}$, $\frac{1}{k^{2}}$ to agent $a_{k}$ and $\frac{1}{k}$ to all other agents. Agent $a_{k}$ gives $\frac{1}{k}$ to all other agents. Agent $a_{k+1}$ would like to allocate its point to agent $a_{k}$, but that would mean it would not be selected itself. Giving its point to other agents will mean it will be, contradicting strategyproofness.
\end{proof}

% \haris{The next para does not fit in a formal proof environment.} \nick{Agreed, Omer do you think we should have a formal corollary here or just move the proof up and add this as discussion?}\omer{Actually think it's fine as an offhand comment in a proof, but removed it to make it ``not-part-of-proof-comment''.}
Interestingly, the proof of this theorem for Dollar Raffle and Dollar Partition Raffle carries on, quite straightforwardly, to the various mechanisms presented for $k=1$ (e.g., \cite{FeKl14a}). Simply running the algorithm several times destroys its strategyproofness. This is true even for mechanisms that are strategyproof for $k=1$, as long as any agent has the power to influence the outcome, i.e., not purely random, a dictatorship, or a combination of both.

\subsection{Comparison with Partition Mechanisms}
\edp seems similar to the Partition mechanism but while Partition must preset the number of agents to be selected from each cluster, \edp \emph{relies on the peer reviews} to decide the number of agents to be selected from each cluster. This difference allows \edp to have more consistent performance, no matter the clustering. Hence, in contrast to \edp, if a particularly bad partition is chosen at random, the rigidity of Partition means that it may not choose a large proportion of the best agents even if agents have unanimous valuations.

\begin{example}
Consider the setting in which $N=\{1,\ldots, 18\}$, $k=6$, and $\ell=3$. Let the clusters be $C_1=\{1,\ldots, 6\}$, $C_2=\{7,\ldots, 12\}$, $C_3=\{13,\ldots, 18\}$. $C_1$ puts all its weight on $C_2$, equally dividing its points between $7,8,\ldots,12$, with a slight edge to $7$ and $8$, $C_2$ and $C_3$ put all the weight on $C_1$, dividing their points between $1,2,3$ and $4$. Now Partition will choose $1,2,7,8,13,14$ where everyone thinks that $1,2,3,4,7,8$ are the best. \edp will select exactly that set. Moreover, if we increase the number of clusters, the disparity between \edp and Partition only grows.
\end{example}
	
	%The rationale behind Exact Dollar Partition is that (1) most valuations may have a high correlation among the agents' beliefs, % (e.g., when each agent's peer evaluation is essentially a ``ground truth'' with some noise)
	%so it is important that a mechanism work well when agents have a high correlation; (2) we do not just want a mechanism to perform well in expectation but the mechanism should return a reasonable outcome \emph{ex-post} especially for high stakes outcomes. 
	%(3) in particular, for various grant and other promotion decisions, it may be acceptable to select some medium agents, but more important to ensure that the agents who are unanimously valued highly are certainly selected. With these concerns in the backdrop we set off to formulate Exact Dollar Partition. 
Partition, in contrast to \edp, performs poorly \emph{ex post}\footnote{For high stakes outcomes, we want a mechanism that performs well on average and rarely returns an especially bad outcome.} if the clusters are lopsided, with some cluster containing all good agents and other clusters containing low value agents. One natural fix is to deliberately choose a balanced partition where the weight of a cluster is based on the ratings of agents outside the cluster and we choose a clustering that minimizes the difference between the cluster weights.
%$| v_{C_i}(C_j) - v_{C_j}(C_i) |$ between each pair of clusters $C_i,C_j$. 
However, for this and various notions of balanced partitions, computing the most balanced partition is NP-hard. What is even more problematic is that if we choose a balanced partition, the resulting mechanism is not strategyproof.% An agent may report in a way so that the most balanced partition is the only in which the manipulating agent is with some other agents in a cluster where he will get selected.	

We point out that there are instances where Partition may perform better than \edp even if the rankings of the agents are unanimous. Consider a case where a highly preferred agent is in the same group as the lowest preferred agents, while other groups only contain medium preferred agents. In that case the weight of the cluster with the highest preferred agent might be so high that the lowest ranked agents might also be selected.% (though not for Borda utilities). 
The normalization of scores entailed in \edp causes a certain loss of information and granularity compared to the other mechanisms.
%\end{remark}
However, even in the example above, \edp will ensure that when agents have highly correlated or unanimous preferences, the agent(s) that are unanimously on the top will be selected, even if some low-ranked agents are also selected.

\section{Experimental Comparison with Other Mechanisms}

Using Python and extending code from \textsc{PrefLib} \cite{MaWa13a} we have implemented the \edp, Credible Subset, Partition, Dollar Raffle, Dollar Partition Raffle, and Vanilla peer selection mechanisms. All the code developed for this project is available open-sourced in the PeerSelection repository on GitHub.\footnote{\textsc{https://github.com/nmattei/peerselection}} As in all simulations there are many parameters to consider that can drastically affect the outcome (see e.g.,~\cite{PRM13a}). By focusing on a target domain, the NSF Mechanism Design Pilot \cite{MeSa09a, NSF14a}, we can draw focused conclusions from our simulations. In 2014, this program had $n=131$ proposals, with each submitter reviewing $m=7$ other proposals, broken into $\ell=4$ clusters. The acceptance numbers are not broken out from the global $\approx$20\% acceptance rate, so we use this as the acceptance rate.% for the program. Hence we assume an $\approx$20\% acceptance rate, the same as NSF as a whole.

%as an easily installable Python package available on GitHub free and open-source under the BSD license. 
%As in all experiments based on simulations there are a plethora of decisions that must be made.
%While the choice of parameter and model values can have significant impacts on the outcomes of these studies (see e.g.,~\cite{PRM13a}) we have chosen an experimental setting that we feel is both well-motivated and closely models real-world settings. 

\subsection{Experimental Setup}

%Each of the algorithms surveyed requires a set of $N = \{1, \ldots n\}$ agents and takes as input a (sparse) $n \times n$ score matrix, where each agent provides $m$ reviews, and returns a set of $k$ winners. For all the algorithms, the agents are divided into $\clusterset = \{C_1, \ldots, C_\ell\}$ clusters, though some algorithms do not use this information to make decisions. 
%Our target domain is grant and peer reviewing such as the NSF Mechanism Design Pilot, described above \cite{MeSa09a}. In 2014, this program had $n=131$ proposals, with each submitter reviewing $m=7$ other proposals, broken into $\ell=4$ clusters. The acceptance numbers are not broken out from the global acceptance rate for the program. Hence we assume an $\approx$20\% acceptance rate, the same as NSF as a whole.% and also similar to other conference and funding acceptance rates. 

To create NSF-like data we generate a sparse $N = \{1, \ldots n\}$ scoring matrix (profile) using a \emph{Mallows Model} to generate the ordinal evaluation \cite{Mall57a,Mard96a}. Mallows models are parameterized by a \emph{reference order ($\sigma$)} and a \emph{dispersion parameter ($\phi$)}. The reference order $\sigma$ can be thought of as the underlying ground truth; the NSF mechanism assumes implicitly that there is some true ordering of proposals of which the reviewers provide a noisy observation. Intuitively, the dispersion parameter $\phi$ is the probability of committing ranking errors by swapping neighboring elements of $\sigma$, where $\phi=0$ means that no agent ever commits an error and $\phi=1.0$ means that orderings are generated uniformly at random \cite{LuBo11a}. Mallows models are used when each agent is assumed to have the same reference ranking subject to some independent noise from a common noise model. Each agent $i \in N$ ends up with a ranking $rank(i,j) \rightarrow \{0, \ldots, m-1\}$ over each agent $j \in N$ where $rank(i,j)=0$ means $j$ is the highest reviewed proposal by $i$. In our evaluation we assume that all agents share the same ground truth ranking, $\sigma$, and that all agents share the same noise parameter $\phi$ that we sweep across a range of values. An interesting direction for future work would be comparing the algorithms when agents may have different notions of the ground truth ordering, i.e., different values for $\sigma$, and/or have different levels of ability, i.e., different values for $\phi$ (as \cite{CKV16a,CKV15a} implicitly do).

Each agent reviews $m$ of the $n$ proposals and is also reviewed by $m$ other agents. Agents are clustered under the constraint that each agent reviews $m$ agents \emph{outside} his cluster. We call a reviewer assignment satisfying these constraints a balanced $m$-regular assignment. To maximize inter-cluster comparison, we want the $m$ reviews provided by agent $i$ to be balanced among the clusters (less $C_i$) so agent $i$ in cluster $C_i$ reviews in total $\frac{m}{\ell -1}$ agents from each other cluster. We generate this assignment randomly and as close to balanced as possible. Up to this point, our setup is similar to that of \citet{CKV16a}, used for studying grade aggregation in MOOCs.

We generate a sparse $n \times n$ score matrix by: drawing a balanced $m$-regular assignment; generating a complete ordinal ranking using a Mallows model for agent $i$; removing all candidates from $i$'s ordinal ranking not assigned to $i$; and assigning score $m-rank(i,j)$ to each agent $j$ that $i$ ranks (known as the Borda score). This process mimics the underlying assumption of the NSF mechanism in that, if a reviewer were to see all proposals, they could strictly order the complete set. The Borda score is well-motivated in this setting as it is the optimal scoring rule, i.e., returns a result closest to the ground truth ranking, for aggregation when agents submit correct orderings \cite{CKV16a}, and is the one used by the NSF in their pilot. This process leaves us with a sparse $n \times n$ score matrix which obeys an $m$-regular assignment of agents partitioned into $\ell$ clusters. 

We use two different orderings to evaluate the performance of all the mechanisms presented: the ground truth (GT) ordering and the ordering selected by vanilla (V). Since a Vanilla-like mechanism is used in many settings, it is fitting to see how well the strategyproof mechanisms approximate it (though we assume it is not manipulated by the participants despite not being strategyproof). This allows us to understand the ``price'' we are paying for strategyproofness. Formally, let $W, W'$ be the winning sets returned by two mechanisms, we measure the similarity of $W$ to $W'$ by $|W \cap W'|/k$. For $n=130$ and $s=1000$, we looked at an ``NSF Like'' space with $k \in \{15, 20, 25, 30, 35\}$, $m \in \{5, 7, 9, 11, 13, 15\}$, $\phi \in \{0.0, 0.10, 0.20, 0.35, 0.50\}$, and $\ell \in \{3, 4, 5, 6\}$.

 %Since Mallows models allow us to know the ground truth ordering we can investigate the performance of the mechanisms with respect to the ground truth (GT) ordering. We also investigate the performance of the mechanisms with respect to the set of proposals selected by Vanilla (V). Since a mechanism much like Vanilla is used in many settings, it is fitting to see how well the strategyproof mechanisms approximate it. This allows us to understand the ``price'' we are paying for strategyproofness. Formally, let $W, W'$ be the winning sets returned by two mechanisms, we measure the similarity of $W$ to $W'$ by $|W \cap W'|/k$. For our experiments $W$ is either (a) the first $k$ agents of the ground truth ordering, GT, or (b) the $k$ agents returned by Vanilla, V. For $n=130$ and $s=1000$, we looked at a realistic ``NSF Like'' parameter space with $k \in \{15, 20, 25, 30, 35\}$, $m \in \{5, 7, 9, 11, 13, 15\}$, $\phi \in \{0.0, 0.10, 0.20, 0.35, 0.50\}$, and $\ell \in \{3, 4, 5, 6\}$.

\subsection{General Results}
It is hard to directly compare results for Credible Subset due to the high probability of returning an empty set under the given parameters. In fact, counter to intuition, Credible Subset performs worse as we increase the number of reviews because this \emph{increases} the chance of returning an empty set (see, e.g., Figure \ref{fig:results}.) This problem is not easy to overcome; removing the ability to return an empty set means Credible Subset is no longer strategyproof. When Credible Subset does return a set, it performs very well, on par with Vanilla. However, the minimum (0 in some cases), average, and standard deviation for Credible Subset are all unacceptable for practical implementation.

\begin{figure}[!htbp]
\centering
\makebox[\textwidth][c]{\includegraphics[width=.65\paperwidth]{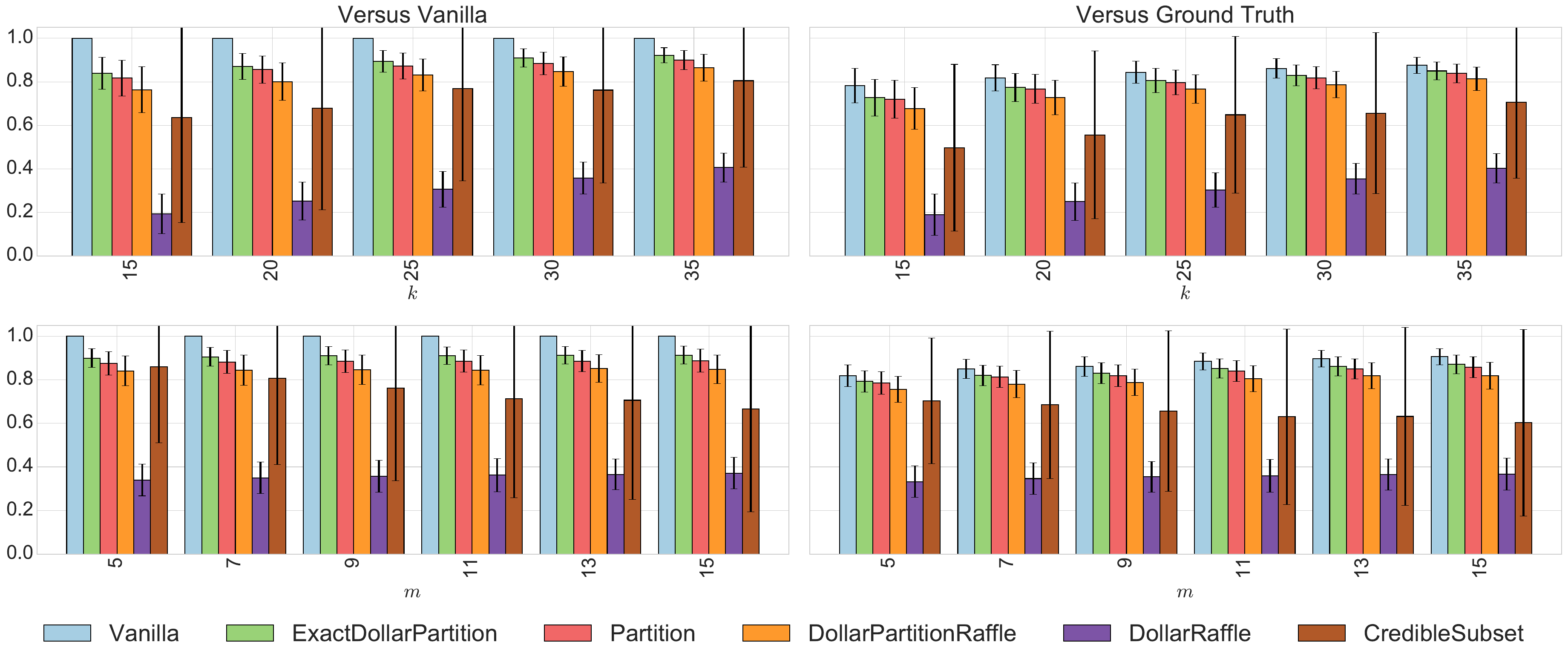}}
\caption{Performance of the six mechanisms surveyed in this paper as we vary settings to $k$ (top row) and $m$ (bottom row) as measured in selected percentage of the Vanilla set (V, left) and ground truth ordering (GT, right). Note that the colors of the bars are in the same order as the table key. For both figures we have set $n=130$, $\ell=4$, and $\phi=0.5$; for the top row, we set $m=9$ as we varied $k$ and for the bottom row we set $k=30$ as we vary $m$. Across the tested range \edp outperforms all other mechanisms when measured against the V or GT ordering with a lower standard deviation. As we increase the value of $m$ or $k$, \edp improves its performance at a faster rate and more consistently than any other mechanism.
}
\label{fig:results}
\end{figure}

Throughout the testing \edp strictly outperforms, for all parameter settings, the other Dollar-based mechanisms described in Section \ref{sec:comparision}. Hence, we conclude that the extra steps developed for \edp are necessary and result in a dramatically increased performance. In this discussion, we will focus on comparing the performance of \edp and Partition only, as these two mechanisms are the top two performing mechanisms that are also strategyproof. Broadly, we find that \edp outperforms Partition. On average, \edp selects between 0.5\% and 5\% more top agents, measured against V or GT. It does this with between 3\% and 25\% lower standard deviation, and selects as many or, in the most extreme cases, up to five more top performing agents. This improvement in the worst case means that \edp is an improvement of up to 25\% when the clustering of the agents is lopsided. Hence, \edp is the best in our experiments: it selects more top agents, more often, with better worst case performance and lower variance than any other strategyproof mechanism.

\subsection{Varying Noise($\phi$) and Clusters ($\ell$)}
Varying the noise parameter $\phi$ has little to no effect on the approximation to V for all mechanisms. This is not surprising as all mechanisms receive the same (noisy) information. When approximating GT, the value of $\phi$ has a negligible effect on the performance of the mechanisms unless $\phi \geq 0.95$; for the remainder, of the discussion we fix $\phi = 0.5$. Varying the setting of $\ell$ we see that all mechanisms perform best with respect to GT when we set $\ell = 5$ and with respect to V when $\ell = 3$; with a decrease in performance as we continue to increase $\ell$. No matter the setting, increasing the number of clusters hurt the performance of Vanilla and \edp the least, i.e., their performance decreases less quickly than the other mechanisms. For the remainder we set $\ell = 4$ as was done for the NSF pilot.

%With $\phi < 0.95$ we see a graceful increase in performance against GT from 87\% up to 93\% for Vanilla as we increase $m$ from 5 to 15; ExactDollarPartition increases from 84\% to 87\% over the same ranges; Partition is always worse than ExactDollarPartition by roughly 2\%. For $\phi \geq 0.95$ we see all mechanisms quickly fall towards $k/n$ as the rankings become close to uniformly random and the mechanisms are unable to recover much of the underlying ordering. For the remainder of the discussion we fix $\phi = 0.5$.

%Varying the setting to $\ell$ we see that all mechanisms perform best with respect to GT when we set $\ell = 5$ and with respect to V when $\ell = 3$. No matter the setting, increasing the number of clusters hurt Vanilla and Exact Dollar Partition's performance the least, i.e., their performance decreases less quickly than the other mechanisms. This likely stems from the increased chance that Partition will select the bottom candidates of a given cluster instead of better ranked candidates in a different cluster, while ExactDollarPartition can intelligently reallocate. For the remainder we set $\ell = 4$ as was done for the NSF pilot.

\subsection{Varying the Number of Selections ($k$) and Reviews ($m$)}
%Varying the number of proposals selected $k$ and the number of reviews $m$ is an important topic both for the comparison of the mechanism under study, but also for the peer review process more generally. These two parameters the most talked about in the literature and there is broad consensus that both of these parameters should be increased.

Figure~\ref{fig:results} captures our metrics as we vary the number of selections $k$, in the top row, and the number of reviews per item $m$, in the bottom row. Varying the setting to $k$ we observe fairly consistent performance by the mechanisms with \edp maintaining a 1.5\% to 3\% advantage. The biggest percentage-wise advantages are found when $k=15$, where \edp selects up to two more top agents according to V, in the worst case, resulting in a $\approx 25\%$ improvement. In the worst case, up to two more top agents according to GT are selected by \edp than Partition. Because both mechanisms perform worse (in absolute terms) than they do as measured by V, this translates to a 10--20\% increase in performance for \edp when $k$ is small and a 5--10\% increase when $k$ is large. Measured against both V and GT we can draw the general conclusion that, as we increase $k$, \edp increases its advantage over Partition.

For the most NSF-like setting where we have $n=130, k=30, l=4, \phi=0.5$, we sweep $m \in \{5,7,9,11,13,15\}$, depicted as the bottom row of Figure \ref{fig:results}. Looking closely at the numbers as measured against V we see that \edp performs 2.6 to 3.0\% better on average, i.e., one better agent. \edp does this with a nearly 20\% smaller standard deviation, always selecting at least one more and up to three more top agents (15\%) in the worst case. This pattern is similar across settings to the other parameters as vary $m$; \edp performs consistently better as we increase the number of reviews. Compared to the performance of Vanilla, \edp selects about one more non-top agent on average, up to two more non-top agents in the worst case ($\approx 7\%$). Hence, the loss in performance we see for moving to a strategyproof mechanism is similar to the loss in performance we see when moving to Partition from \edp.

\section{Conclusion}

The problem we have considered here is one that has many common applications: from NSF funding allocations and conference paper selection to voting for a committee in an organization's board and decision making within groups. All these problems are, fundamentally, a set of peers selecting the ``best'' subset of themselves according to their own quality criteria.
We detail a new strategyproof mechanism, which incorporates ideas from the Partition mechanism \cite{AFPT11a} and literature on dividing a continuous resource \cite{CMT08a,TiPl08a}, combined with a new allocation mechanism, which addresses a long-standing problem of turning a fraction allocation into an integer one, while adding some desirable properties over existing solutions. Moreover, we are able to show, via a set of simulations, that our proposed mechanism performs better than other existing mechanisms.

The next stage in this line of research, we believe, will not have to do with finding additional strategyproof mechanisms, but rather with finding ways to eliminate problematic agents' preferences. This might be achieved either by relaxing the notion of strategyproofness in return for a degree of agent incentive, or by identifying ``problematic'' agents or very good ones, whose opinions may be weighed differently.

\section*{Acknowledgments}
Authors wish to thank Allan Borodin, Markus Brill, Manuel Cebrian, Serge Gaspers, Ian Kash, Julian Mestre, and Herv{\'{e}} Moulin for useful comments. 
%for their insights into the allocation problem. 
Data61/CSIRO (formerly known as NICTA) is funded by the Australian Government through the Department of Communications and the Australian Research Council through the ICT Centre of Excellence Program. This research has also been partly funded by Microsoft Research through its PhD Scholarship Program, Israel Science Foundation grant grants \#1227/12 and \#1340/18, and NSERC grant 482671. This work has also been partly supported by COST Action IC1205 on Computational Social Choice. Haris Aziz was supported by a Julius Career Award and a UNSW Scientia Fellowship.

% \matmethods{Code and data used in evaluation of algorithms in this paper is online as a free and open source software package available from \url{https://github.com/nmattei/peerselection} or \url{http://arxiv.org/abs/1604.03632}.
% }
%
% \showmatmethods

% \acknow{Authors wish to thank Allan Borodin, Markus Brill, Manuel Cebrian, Serge Gaspers, Ian Kash, Julian Mestre, and Herv{\'{e}} Moulin for useful comments.
% %for their insights into the allocation problem.
% Data61 (formerly known as NICTA) is funded by the Australian Government through the Department of Communications and the Australian Research Council through the ICT Centre of Excellence Program. This research has also been partly funded by Microsoft Research through its PhD Scholarship Program, Israel Science Foundation grant \#1227/12, and NSERC grant 482671. This work has also been partly supported by COST Action IC1205 on Computational Social Choice.}

%\showacknow % Display the acknowledgments section

% \pnasbreak splits and balances the columns before the references.
% If you see unexpected formatting errors, try commenting out this line
% as it can run into problems with floats and footnotes on the final page.
%\pnasbreak

% Bibliography

%\section*{References}
%\bibliography{../abb,../prizerefs}

\end{document}